\theoremstyle{plain}
\newtheorem{theorem}{Theorem}
\newtheorem{lemma}{Lemma}
\newtheorem{corollary}{Corollary}
\newtheorem{observation}{Observation}
\def\arxiv{}
\title{Strategic Resource Selection with Homophilic Agents}
\author{
Jonathan Gadea Harder$^1$
\and
Simon Krogmann$^1$
\and
Pascal Lenzner$^1$\And
Alexander Skopalik$^2$
\affiliations
$^1$Hasso Plattner Institute, University of Potsdam\\
$^2$Mathematics of Operations Research, University of Twente\\
\emails
\{jonathan.gadeaharder, simon.krogmann, pascal.lenzner\}@hpi.de,
a.skopalik@utwente.nl
}
\newcommand{\game}{Schelling Resource Selection Game}
\newcommand{\gameshort}{SRSG}
\newcommand{\s}{\mathbf{s}}
\DeclareMathOperator*{\argmax}{arg\,max}
    \renewcommand{\cite}[1]{\parencite{#1}}
\begin{document}

\maketitle
\begin{abstract}
The strategic selection of resources by selfish agents is a classic research direction, with Resource Selection Games and Congestion Games as prominent examples. In these games, agents select available resources and their utility then depends on the number of agents using the same resources. This implies that there is no distinction between the agents, i.e., they are anonymous.

We depart from this very general setting by proposing Resource Selection Games with heterogeneous agents that strive for joint resource usage with similar agents. So, instead of the number of other users of a given resource, our model considers agents with different types and the decisive feature is the fraction of same-type agents among the users. More precisely, similarly to Schelling Games, there is a tolerance threshold $\tau \in [0,1]$ which specifies the agents' desired minimum fraction of same-type agents on a resource. Agents strive to select resources where at least a $\tau$-fraction of those resources' users have the same type as themselves. For $\tau=1$, our model generalizes Hedonic Diversity Games with a peak at $1$.

For our general model, we consider the existence and quality of equilibria and 
the complexity of maximizing social welfare. Additionally, we consider a bounded rationality model, where agents can only estimate the utility of a resource, since they only know the fraction of same-type agents on a given resource, but not the exact numbers. Thus, they cannot know the impact a strategy change would have on a target resource. Interestingly, we show that this type of bounded rationality yields favorable game-theoretic properties and specific equilibria closely approximate equilibria of the full knowledge setting.
\end{abstract}

\section{Introduction}
Selecting resources in a multi-agent setting is a long-established field of study in Artificial Intelligence, Operations Research, and Theoretical Computer Science. Resources can be as diverse as compute servers or printers, facilities like hospitals, universities, high schools or kindergartens, office rooms, restaurants or pubs, or driving routes to the workplace. The main common feature of all such resources typically is that the utility of the agents selecting them depends on the resource selections of all other agents. For almost all resources, the utility of an agent depends on the number of other agents that chose to select the same resource as the agent.

However, if the utility only depends on the number of agents that use a resource, this implies that the agents are indifferent to whom they are sharing their selected resource with. While this is a natural assumption in some settings, e.g., for jobs on a compute server or for customers of a gas station, there are many scenarios where real-world agents have more complex preferences. One prime example of this is the phenomenon of residential segregation~\cite{massey1988dimensions}. There, the agents are residents of a city that select a place to live. Typically, real-world residents are not indifferent to who else lives in their neighborhood, but instead prefer at least a certain fraction of neighbors of the same ethnic group or socioeconomic status. This behavior is commonly called homophily and it is seen as the driving force behind the strongly segregated neighborhoods, i.e., regions where similar people live, that can be observed in most major cities.

A classic way to model homophilic agents is to assume that agents have different types and that they strive for having at least a $\tau$-fraction of same-type agents sharing their selected resource. In terms of residential segregation, this is captured by Schelling's seminal agent-based model~\cite{SCHELLING}.

In this paper, we present and investigate a general model for strategic resource selection by a set of heterogeneous agents with homophilic preferences. For this, we incorporate Schelling's idea of threshold-based agent preferences into the classic setting of Resource Selection Games, where a set of resources is given and agents each can access a subset of them. Now, instead of selecting a resource with few users, an agent aims for selecting a resource that is shared at least with a $\tau$-fraction of users that have the same type as the agent.

We believe that this adds an important new dimension to classic agent-based resource selection problems. For example, our model allows us to investigate strategic school choice, where families of different ethnic groups that are located within a city select nearby schools for their children. Each family can select a subset of the available schools, typically schools that are in their neighborhood, but they want to ensure that their children have a certain fraction of fellow pupils of their own ethnic group.\footnote{Schools are more segregated than their neighborhoods~\cite{burgess2005parallel}, i.e., families actively select schools where the own ethnic group has a high presence~\cite{hailey22}.} Another example, is the choice of pubs that supporters of different sports teams select for watching the next match. For enjoying the match with friends, each supporter aims for patronizing a pub with a least a certain fraction of like-minded fans.

\begin{figure*}[ht]
\centering
\begin{subfigure}{0.30\textwidth}
  \centering
  \includegraphics[width=1\linewidth]{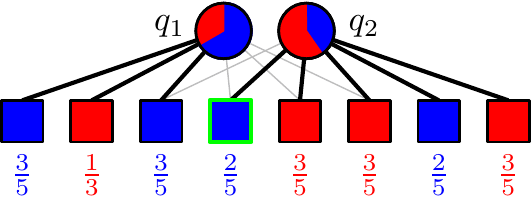}
  \caption{Social optimum. Neither impact-aware nor impact-blind equilibrium.}
  \label{fig:intro1}
\end{subfigure}
\hfill
\begin{subfigure}{0.30\textwidth}
  \centering
  \includegraphics[width=1\linewidth]{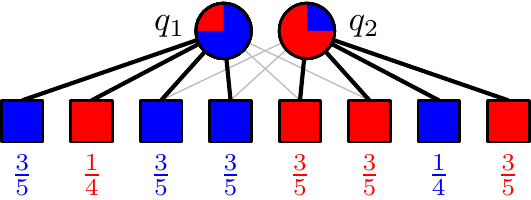}
  \caption{Impact-aware equilibrium. Not socially optimal.}
  \label{fig:intro2}
\end{subfigure}
\hfill
\begin{subfigure}{0.30\textwidth}
  \centering
  \includegraphics[width=1\linewidth]{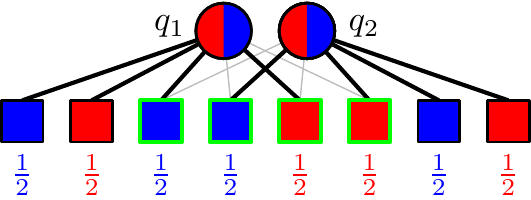}
  \caption{Impact-blind equilibrium but no impact-aware equilibrium.}
  \label{fig:intro3}
\end{subfigure}
\caption{Example instance of our model with three strategy profiles. The instance has two resources $q_1$ and $q_2$ (shown as circles, with their color fractions shown as pie charts), four red agents, and four blue agents, each shown as squares of the respective color. Moreover, we assume $\tau = \tfrac35$. Accessibility is shown via edges, thick black edges show the chosen resource of the respective agent. The fractions below the squares show the utilities of the agents. \textbf{(a)} shows the social optimum strategy profile with a social welfare of $\tfrac{62}{15} > 4.1$. It is neither an IAE nor an IBE, since the blue agent highlighted in green can increase her utility from $\tfrac25$ to $\tfrac35$ by selecting resource $q_1$ instead of $q_2$. \textbf{(b)} shows an IAE. Since it has social welfare of $4.1$ it is not socially optimal. \textbf{(c)} depicts an IBE with social welfare of $4$. It is not an IAE, since changing to the respective other resource is an impact-aware improving move for all the green highlighted agents.} 
\label{fig:intropics}
\end{figure*}

\subsection{Model and Notation}
We consider a strategic game, called the \emph{\game~(\gameshort)}, defined by a given bipartite \emph{accessibility graph} $G = (Q \cup A, E)$, with $Q \cap A = \emptyset$, where $Q$ is the set of \emph{resources} and $A$ is the set of \emph{agents} that want to use the resources at the same time.
Let $|Q|=k$, $|A|=n$, and $|E|=m$. 
An edge $\{q,a\} \in E$, with $q\in Q$ and $a\in A$ encodes that agent $a$ has access to resource $q$, i.e., resource $q$ is \emph{accessible} for agent $a$. We use $qa$ as shorthand for $\{q,a\}$. See \Cref{fig:intropics} for an illustration.

Most importantly for our paper, we assume that the society of agents is heterogeneous, in particular, that we have two\footnote{For more types of agents, \Cref{welopt1,thm:fip,poaupper,poalower,thm:ibe-pos,thm:pos-not-one} still hold.} types of agents that we distinguish by the colors \emph{red} and \emph{blue}. We have $A = R \cup B$, where $R$ is the set of $0<r<n$ many red agents and $B$ is the set of $b = n-r$ many blue agents.

We assume that all resources are identical in quality. 
Agents strategically select a single accessible resource to use, i.e., they each choose a single incident edge of $G$. We consider \emph{homophilic agents}, i.e., agents that favor a joint resource usage with other agents of the same type, if there are other agents using the same resource at all. In particular, analogously to Schelling's model for residential segregation~\cite{SCHELLING}, we assume a global tolerance threshold $\tau\in [0,1]$ for all the agents, that defines the minimum fraction of same-type agents that an agent seeks when using a resource. 
More formally, let $a\in A$ be any agent and let $X(a) \subseteq Q$ denote the set of \emph{accessible resources for agent $a$}, i.e., we have $X(a) = \{q\in Q \mid qa\in E\}$. Similarly, we define $Y(q) = \{a\in A \mid qa\in E\}$ for all $q \in Q$. Each agent $a$ selects a resource $\s(a) \in X(a)$ and we say that $\s(a)$ is agent $a$'s \emph{strategy}. The \emph{strategy profile} $\s = (\s(a_1),\dots,\s(a_n))$, with $a_i\in A$ and $\s(a_i) \in X(a_i)$, for $1\leq i \leq n$, then denotes the vector of strategies of all agents. 
For a strategy profile~$\s$ and a resource~$q$, we denote $\#(q,\s) = |\{a\in A \mid \s(a) = q\}|$ as the number of agents that jointly use $q$ under $\s$. Moreover, let $\#_R(q,\s) = |\{a\in R\mid \s(a)=q\}|$ denote the number of red agents that use resource $q$ under $\s$. Analogously, $\#_B(q,\s) = \#(q,\s) - \#_R(q,\s)$ is the number of blue agents using resource $q$ under $\s$.
Also, let \begin{linenomath}$$\rho_R(q,\s) = \frac{\#_R(q,\s)}{\#(q,\s)} \textnormal{\quad and \quad} \rho_B(q,\s) = \frac{\#_B(q,\s)}{\#(q,\s)}$$\end{linenomath} denote the fraction of red and blue agents, respectively, that use resource $q$ under strategy profile $\s$. Note that these fractions are undefined if no agent uses $q$ under $\s$.
 
The \emph{utility} of agent $a$ under strategy profile $\s$ is defined as 
\begin{linenomath}
\begin{equation*}
u(a,\s) = 
\begin{cases}
\min\{\rho_R(\s(a),\s),\tau\}, & \textnormal{if $a$ is red, and}\\
\min\{\rho_B(\s(a),\s),\tau\}, & \textnormal{if $a$ is blue.}
\end{cases}
\end{equation*}
\end{linenomath}
Thus, agents are striving for using a resource in a group of same-type agents that represents at least a $\tau$-fraction of all users of that resource.\footnote{Using a resource alone gives a fraction of same-type agents of~1.}
Note that $u(a,\s)$ is monotonically increasing in the fraction of same-type agents using the same resource. Also, if at least a $\tau$-fraction of same-type agents use resource $q$, then all of these agents have maximum utility.\footnote{Capping the utility at $\tau$ is a technical assumption that allows for more elegant proofs. All our results also hold if the utility is normalized to be $1$ if the fraction of same-type agents is at least $\tau$.}

Given a strategy profile $\s$ with $\s(a_i) = q$, we use the shorthand $\s = (q,\s_{-i})$, where vector $\s_{-i} = (\s(a_1),\dots,\s(a_{i-1}),\s(a_{i+1}),\dots,\s(a_n))$ is identical to the vector $\s$ without the $i$-th entry. Regarding strategy changes, we will consider two variants: impact-blind improving moves and impact-aware improving moves. Let $\s' = (q',\s_{-i})$ denote a strategy profile that is identical to $\s$ except for the $i$-th entry, i.e., $\s(a_j) = \s'(a_j)$ for $i\neq j$ and $\s(a_i) \neq \s'(a_i)$, with $\s'(a_i) = q'$. We say that the strategy change of agent $a_i$ from strategy $q$ to strategy $q'$ that leads from $\s$ to $\s'$ is an \emph{impact-aware improving move} if $u(a_i,(q',\s_{-i})) > u(a_i,(q,\s_{-i}))$.

The same strategy change is an \emph{impact-blind improving move}, if $\min\{\rho_R(q',\s), \tau\} > u(a_i,\s)$ and $a_i$ is red, or if $\min\{\rho_B(q',\s), \tau\} > u(a_i,\s)$ and $a_i$ is blue.
Since $\rho_R(q', \s)$ and $\rho_B(q', \s)$ are undefined for $\#(q', \s) = 0$, we say that switching to an empty resource is also an impact-blind improving move for an agent $a_i$, if $u(a_i, \s) < \tau$.
We call such an improving move impact-blind, since agent~$a_i$ only compares her utility with the fraction of same-type agents on resource~$q'$, prior to moving to resource~$q'$. This models that agent~$a_i$ has limited information about resource~$q'$, that is, she does not know $\#(q',\s_{-i})$ exactly, but only knows the fraction of agents of her type on $q'$ before the move. Hence, agent~$a_i$ cannot estimate the exact fraction of same-type agents using resource~$q'$ after she joins resource~$q'$, i.e., under strategy profile~$\s'$. She only knows that her joining resource~$q'$ does not decrease this fraction of same-type agents.

Based on impact-aware and impact-blind improving moves, we define two solution concepts for our strategic game. We say that $\s$ is in \emph{impact-aware equilibrium (IAE)} if no agent has an impact-aware improving move. Analogously, strategy profile $\s$ is in \emph{impact-blind equilibrium (IBE)} if no agent has an impact-blind improving move. Slightly abusing notation, we will use IAE and IBE also for denoting the sets of strategy profiles that are in IAE or IBE, respectively.
Since $u(a_i,(q',\s_{-i})) \geq \min\{\rho_R(q',\s), \tau\}$ if $a_i$ is red or $u(a_i,(q',\s_{-i})) \geq \min\{\rho_B(q',\s), \tau\}$ if $a_i$ is blue, every impact-blind improving move also is an impact-aware improving move, i.e., every strategy profile in IAE also is in IBE.
However, the converse does not hold.\footnote{Counterexample: Let two resources be used by exactly one red and one blue agent each.} Moreover, we will also consider approximate equilibria. A \emph{$\beta$-approximate IAE} is a strategy profile where no agent has an impact-aware improving move that yields $\beta$ times her current utility.

The \emph{social welfare} of a strategy profile $\s$, denoted by $W(\s)$, is defined as $W(\s) = \sum_{i=1}^n u(a_i,\s)$, i.e., as the sum of all the agents' utilities. For a given graph $G$, let $\textnormal{OPT}_G$, called \emph{social optimum}, denote a strategy profile that maximizes the social welfare for $G$. 
Using the social welfare, we now define our measure for the quality of equilibria. For a graph $G$, let $\textnormal{maxIAE}(G)$ and $\textnormal{minIAE}(G)$ be the strategy profiles in IAE for $G$ with maximum and minimum social welfare, respectively. Analogously, we define $\textnormal{maxIBE}(G)$ and $\textnormal{minIBE}(G)$ for IBE. 
Then the \emph{Price of Anarchy with respect to the IAE}, IAE-PoA for short, is defined as $\max_G\{\frac{W(\textnormal{OPT}_G)}{\textnormal{minIAE}(G)}\}$. The \emph{Price of Stability with respect to the IAE}, IAE-PoS for short, is defined as $\max_{G}\{\frac{W(\textnormal{OPT}_G)}{\textnormal{maxIAE}(G)}\}$. For the IBE, we define the IBE-PoA and the IBE-PoS analogously.

Regarding the game dynamics, if the agents only perform impact-aware improving moves, then we call this \emph{impact-aware dynamics}. Analogously, in \emph{impact-blind dynamics} only impact-blind improving moves occur. If starting from every strategy profile every sequence of impact-aware (impact-blind) improving moves is finite, then we say that the \emph{impact-aware (impact-blind) finite improvement property}, short IA-FIP or IB-FIP, holds.

\subsection{Related Work}
Resource selection as an optimization problem is a classic topic in combinatorial optimization~\cite{papadimitriou1998combinatorial}, with many variants of scheduling, packing, or covering problems as examples.

Strategic resource selection started with Congestion Games~\cite{rosenthal}, where a set of resources is given, and agents select a subset of them. The agents' cost then depends solely on the number of users on their chosen set of resources. In extensions, weighted versions and even agent-specific cost functions are allowed~\cite{milchtaich1996congestion}. Prominent examples are the strategic selection of paths in a network~\cite{roughgarden2002bad,pos} or server selection by selfish jobs~\cite{vocking2007selfish}. 
Also, competitive facility location, where the facilities compete for the clients~\cite{vetta-utility-system} or the clients compete for facilities~\cite{load-balancing,Peters2018,krogmann,krogmannAAAI} can be seen as strategic resource selection. 
Another example is the selection of group activities~\cite{darmann2012group,igarashi2017group}. In all the above models, the utility of an agent depends on the number of other agents that select the same resources.

In contrast to this, and much closer to our work, are models with heterogeneous agents. Recently, game-theoretic models for the creation of networks by homophilic agents have been studied~\cite{BLM22}. More related to our model is Schelling's model for residential segregation~\cite{SCHELLING}, where agents with a type strategically select a location in a residential area. These agents behave according to a threshold-based utility function that yields maximum utility if at least a certain fraction of same-type agents populate the neighborhood of the selected location. Game-theoretic variants of Schelling's model, called Schelling Games, have recently been studied~\cite{CLM18,E+19} and also variants became popular, where agents strive to maximize the fraction of same-type neighbors~\cite{A+21,bullinger21,KKV21,BBLM22,KKV22} or with single-peaked preferences~\cite{BBLM22singlepeaked,single_peaked_jump}. Schelling Games are different from our model since every resource, i.e., location, can only be chosen by at most one agent and thus the respective neighborhoods of agents only partially overlap. Moreover, the size of these neighborhoods is bounded and given by the graph that models the residential area, whereas in our model there is no limit on the number of users of a specific resource.

Closest related to our model are Hedonic Diversity Games~\cite{BEI19,BE20,Darmann21,Ganian22}, where agents of different types strategically form coalitions, and their utility depends on the fraction of same-type agents in their chosen coalition. While preferences over fractions may be arbitrary in these games, our model generalizes the special case with preferences resembling a $\tau$-threshold function as in Schelling's model or with agents having single-peaked utilities with a peak at $1$, since in our model the access to the resources can be restricted. Thus, these special cases of Hedonic Diversity Games match our model with a suitable $\tau$ on a complete bipartite accessibility graph. Also related are Hedonic Expertise Games~\cite{CaskurluKO21}, where the utility of agents increases with more different types in the coalition.

\subsection{Our Contribution}
We consider strategic resource selection problems with homophilic agents. In contrast to previous work, the utility of our agents does not depend on the number of other agents that use the same resource, but on the fraction of same-type agents. This opens up a new direction for resource selection problems, like Congestion Games. 

Another main conceptual contribution is the study of impact-blind equilibria, which can be understood as a natural bounded-rationality variant of classic Nash equilibria and hence might be of independent interest. Impact-blindness models that the exact number of users of a resource might not be known to the agents, but only the fraction of user types. For example, for a large-scale multi-agent technical system, it might be reported that it is currently used by 25\% red and 75\% blue users. From an agent's perspective, this is a game with incomplete information. For risk-averse agents, the best response is indeed equivalent to being impact-blind. Even for risk-neutral agents, if the (expected) number of agents is large, the optimal behavior is impact-blind. Mathematically speaking, as the number of agents grows, the set of Bayesian equilibria of the game with incomplete information converges to the set of impact-blind equilibria.

As our main technical contribution, we consider strategic resource selection with homophilic agents both with impact-aware and with impact-blind agents. For the latter, we prove that equilibria always exist and that they can be constructed efficiently. Moreover, equilibria are guaranteed to exist for $\tau \leq \frac12$ for impact-aware agents. Also, we show that specific impact-blind equilibria resemble 2-approximate impact-aware equilibria, which ensures the existence of almost stable states for any $\tau$ in the impact-aware setting.

Regarding the quality of equilibria, we prove tight constant bounds on the PoA for both versions and we show that the PoS is $1$ for $\tau=1$. On the complexity side, we show that maximizing social welfare is NP-hard in general, but efficient computation is possible for restricted instances.

\ifdefined\arxiv

\else
See~\cite{homophilic_arxiv} for all omitted details.
\fi

\section{Complexity}\label{sec:complexity}

We begin by studying the computational complexity of finding optimal profiles of \game{}s. We show that this is a hard problem in general, but for sparse instances, i.e., with bounded degrees of the resource nodes or the agent nodes, we can compute optimal solutions efficiently. 

\begin{theorem}
\label{welopt1}
    For any threshold $\tau>0$, it is {\sc NP}-hard to decide if every agent can get maximum utility.
\end{theorem}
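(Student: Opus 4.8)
The plan is to reduce from a suitable satisfiability problem, the cleanest choice being Monotone SAT (every clause has either only positive or only negative literals), which is NP-hard. First I would restate the target condition combinatorially: a strategy profile gives every agent maximum utility exactly when every agent sits on a resource whose same-type fraction is at least $\tau$; equivalently, on every used resource $q$ we need $\rho_R(q,\s)\ge\tau$ if $q$ carries a red agent and $\rho_B(q,\s)\ge\tau$ if it carries a blue agent. Since $\rho_R+\rho_B=1$, a resource carrying both colors forces $\tau\le\rho_R\le 1-\tau$, so a mixed resource is feasible only when $\tau\le\tfrac12$; for $\tau>\tfrac12$ every used resource must be monochromatic.

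I would handle the regime $\tau>\tfrac12$ (which includes $\tau=1$, the Hedonic Diversity connection) first, as it is clean. Create one resource $q_x$ per variable $x$ and read its color in a max-utility profile as the truth value ($q_x$ used by reds $=$ true). Encode each positive clause by a red agent whose accessible resources are exactly the variable-resources of its literals, and each negative clause by a blue agent accessing the corresponding set. Because used resources are monochromatic here, a max-utility profile is precisely an assignment of a color to each variable-resource such that every red (positive-clause) agent reaches a red resource and every blue (negative-clause) agent reaches a blue resource, i.e.\ a satisfying assignment of the monotone formula. This yields the two-way correspondence and the hardness for all $\tau>\tfrac12$.

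The main obstacle is the regime $\tau\le\tfrac12$, where mixed resources are feasible and the monochromaticity that did the work above disappears: a lone red and a lone blue agent can share an empty resource at ratio $\tfrac12\ge\tau$ and both be happy, so a single resource can no longer force a variable to commit to one polarity. To recover this I would attach to each variable-resource a forced \emph{ballast} of degree-$1$ agents, calibrated to $\tau=a/b$, that pins the base same-type fraction exactly to the threshold, so that any extra agent of the minority color pushes the majority below $\tau$ and strands the ballast; this makes the resource effectively accept clause-agents of only one color. Consistency of a variable (it cannot serve both polarities at once) would then be enforced across the two resources of a variable gadget by shared-access \emph{swing} agents whose only feasible placement leaves exactly one of the two resources activatable. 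The delicate part, and where I expect the real work to lie, is choosing these ballast and swing multiplicities so that the intended (satisfying) configurations exactly meet the $\tau$-threshold while every cheating configuration strands some agent below it, and then proving both directions of the reduction for arbitrary rational $\tau$.
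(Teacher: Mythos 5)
Your case $\tau > \tfrac12$ is correct and complete: monochromaticity is forced, and the reduction from Monotone SAT cleanly identifies max-utility profiles with satisfying assignments. But the theorem claims hardness for \emph{every} $\tau>0$, and for $\tau \le \tfrac12$ --- which you yourself flag as "where the real work" lies --- you give a plan, not a proof, and the plan has concrete defects. First, pinning a same-type fraction \emph{exactly} at $\tau$ is impossible for irrational $\tau$ (all fractions arising in a profile are rational), yet the statement covers such $\tau$. Second, your sharpness claim is backwards: at a resource whose red (minority) fraction is pinned at $\tau < \tfrac12$, an extra \emph{red} agent cannot push the blue majority below $\tau$ once the ballast has more than $\tau/(1-2\tau)$ agents --- indeed for $\tau=\tfrac13$ no ballast size at all makes that happen; what is true is the reverse, that an extra \emph{blue} agent strands the red ballast. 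Even with that fix, the gadget only rejects one color; to guarantee that agents of the accepted color cannot pile on and strand the blue ballast, the number of potential joiners must be bounded, which requires reducing from a bounded-occurrence SAT variant that your construction never invokes. Third, the "swing agents" that are supposed to prevent a variable from serving both polarities are entirely unspecified, and that consistency mechanism is the crux of the reduction.

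The paper's proof shows that none of this delicate calibration is needed and treats all $\tau>0$ uniformly. It reduces from $(3,4)$-SAT (3SAT in which each variable occurs in at most $4$ clauses): each variable $X_i$ gets two resources $x_i,\neg x_i$ and $v = 2\lceil 4/\tau\rceil$ red ballast agents connected to both; each clause gets a single blue agent connected to its three literal resources. Variable consistency comes free by pigeonhole --- at least $v/2$ ballast agents sit on one of the two resources --- and that resource is then poisoned for blue agents because at most $4$ clause agents can ever reach it, so a blue agent there has fraction at most $4/(4+\lceil 4/\tau\rceil) < \tau$. In other words, an \emph{overwhelming} ballast against a \emph{bounded} number of opposing agents replaces exact thresholds entirely; adopting this idea would both close your $\tau\le\tfrac12$ gap and collapse your two cases into one argument.
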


\ifdefined\arxiv
\begin{proof}
We show this by reducing from $(3,4)$-SAT, which is 3SAT where each variable appears in at most $4$ clauses. This has been shown to be {\sc NP}-complete~\cite{TOVEY198485}.

Given a $(3,4)$-SAT instance $\phi$ with variables $X_1,\ldots,X_m$ and clauses $C_1,\ldots,C_r$, we construct a \game~$\Gamma_\phi$ as follows.

For each variable $X_i$, we introduce two {\em variable resources} $x_i$ and $ \neg x_i$ and $v:=2\cdot\lceil \frac{4}{\tau}\rceil$ {\em variable agents} $\chi_{i,1},\ldots,\chi_{i,v}$ which are colored red and connected to both variable resources $x_i$ and $\neg x_i$.
For each clause $C_j$, we have a blue {\em clause agent} $\gamma_j$ that is connected to the three variable resources that correspond to the literals of clause $C_j$.

We first show that if $\phi$ is satisfiable, there is a monochromatic strategy profile that gives all agents maximum utility.
Let $X^*_1,\ldots,X^*_m$ be a satisfying assignment for $\phi$, we then assign the variable agents as follows. If $X^*_j$ is true, the agents $\chi_{j,1},\ldots,\chi_{j,v}$ are assigned to resource $\neg x_j$. Otherwise, they are assigned to resource $x_j$.
We assign every clause agent to a variable resource of a variable that satisfies that clause. As $\phi$ is satisfying such variables exist and by the choice above there is no variable agent on that resource.
In this strategy profile, every agent is on a monochromatic resource.

Now, if $\phi$ is not satisfiable, let $\s$ be an arbitrary strategy profile of $\Gamma_\phi$. We show that there exists at least one unsatisfied agent. We derive a variable assignment $X'_1,\ldots,X'_n$ from $\s$ as follows. If at least $\frac12 v$ variable agents are on resource $x_i$, we set $X'_i$ to false; otherwise, $X'_i$ is set to true.
Since $\phi$ is not satisfiable, there exists a clause $C_j$ which is not satisfied by $
X'$. For the corresponding blue clause agent $\gamma_j$, all three resources that she can choose from have at least $\frac12 v$ many red variable agents. As each variable appears in at most $4$ clauses, there are at most $3$ other (blue) clause agents on each of the variable resources. Hence, the ratio for $\gamma_j$ is at most 
$\frac{4}{4+\frac12 v} = \frac{4}{4+\lceil \frac{4}{\tau}\rceil} < \tau $.

Therefore, there is a strategy profile in $\Gamma_\phi$ that gives every agent maximum utility if and only if $\phi$ is satisfiable.
\end{proof}
\fi

As a corollary, we obtain hardness for the problem of computing a profile maximizing social welfare
since finding an optimal profile also solves the problem of deciding whether every agent can achieve utility $\tau$.
\begin{corollary}
    Computing the social optimum is {\sc NP}-hard.
\end{corollary}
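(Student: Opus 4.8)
The plan is to obtain the corollary as a polynomial-time Turing reduction from the decision problem that \Cref{welopt1} shows to be {\sc NP}-hard. The one structural fact I would build on is that each agent's utility is capped at $\tau$: by definition $u(a,\s) = \min\{\rho_R(\s(a),\s),\tau\} \le \tau$ for every red agent and $u(a,\s) = \min\{\rho_B(\s(a),\s),\tau\} \le \tau$ for every blue agent. Hence the social welfare obeys $W(\s) = \sum_{i=1}^n u(a_i,\s) \le n\tau$ for every strategy profile $\s$, and equality holds exactly when all $n$ agents simultaneously attain utility $\tau$, that is, exactly when every agent gets maximum utility.

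Given this, the reduction is immediate. Suppose one could compute a social optimum in polynomial time. Then for any accessibility graph $G$ I would compute $\textnormal{OPT}_G$, evaluate its welfare $W(\textnormal{OPT}_G)$, and compare this value against the fixed number $n\tau$. By the observation above, $W(\textnormal{OPT}_G) = n\tau$ if and only if there exists a profile in which every agent receives maximum utility. So such an algorithm would decide the problem of \Cref{welopt1} in polynomial time. Since that problem is {\sc NP}-hard for every fixed $\tau > 0$, computing the social optimum must be {\sc NP}-hard as well.

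There is essentially no hard step here; the only point requiring care is the equivalence ``$W(\textnormal{OPT}_G) = n\tau$ if and only if every agent gets maximum utility'', which hinges entirely on the per-agent cap at $\tau$ and on the fact that the maximum attainable value of $\min\{\rho_R,\tau\}$ (resp. $\min\{\rho_B,\tau\}$) is exactly $\tau$, achievable for instance on a monochromatic resource. As $n$ and $\tau$ are part of the input, forming $n\tau$ and performing the comparison are trivially polynomial, so the reduction is valid and the corollary follows directly from \Cref{welopt1}.
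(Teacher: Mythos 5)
Your proposal is correct and takes essentially the same route as the paper, which also argues that an algorithm computing a social optimum would decide the question of \Cref{welopt1}, since every agent attains maximum utility if and only if the optimal welfare equals $n\tau$. Your write-up merely makes explicit the per-agent cap $u(a,\s) \le \tau$ and the resulting equivalence, which the paper leaves implicit in its one-sentence justification.
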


\Cref{welopt1} even holds if we restrict it to instances where every agent can choose from a set of at most three resources.
However, this can not be extended to a maximum of two.
\begin{theorem}
For $\tau=1$, deciding if every agent can get utility $1$ is solvable in polynomial time if each $a\in A$ has degree $2$.
\end{theorem}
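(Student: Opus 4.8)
The plan is to reduce the decision problem to 2-satisfiability. Observe first that for $\tau=1$ an agent attains utility $1$ exactly when every other agent sharing its resource has the same type, i.e., when its chosen resource is monochromatic. Hence ``every agent can get utility $1$'' is equivalent to asking whether each agent can be assigned to one of its (exactly two) accessible resources so that no resource ends up being used by both a red and a blue agent. I would model each resource $q$ by a Boolean variable $z_q$, with the intended meaning that $z_q$ is \emph{true} if $q$ is reserved for red agents and \emph{false} if it is reserved for blue agents; empty resources may take either value.

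Next I would translate the agents into clauses. A red agent $a$ with accessible resources $X(a)=\{q_1,q_2\}$ must be placed on a red-reserved resource, which is captured by the clause $(z_{q_1}\lor z_{q_2})$; symmetrically, a blue agent with $X(a)=\{q_1,q_2\}$ must sit on a blue-reserved resource, captured by $(\lnot z_{q_1}\lor\lnot z_{q_2})$. Because every agent has degree $2$, each clause contains exactly two literals, so the resulting formula $\Phi$ is an instance of 2-SAT, which is decidable in polynomial time and of size linear in $n+k$. This is precisely the point where the degree-$2$ assumption is essential: with degree $3$ one would obtain $3$-literal clauses, mirroring the $3$-SAT reduction behind the hardness of \Cref{welopt1}.

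Finally I would verify the equivalence in both directions. From a satisfying assignment of $\Phi$, assign each red agent to one of its true-valued accessible resources (one exists by its clause) and each blue agent to a false-valued one; then every used resource carries agents of a single color, so all agents have utility $1$. Conversely, from a profile in which every agent has utility $1$, set $z_q$ true iff $q$ hosts a red agent; monochromaticity makes this well defined and shows all clauses satisfied. The main obstacle is not the construction but confirming this two-way correspondence cleanly, in particular handling empty resources and double-checking that a true (resp.\ false) variable can never be forced to host an agent of the wrong color; once that is checked, polynomial-time solvability follows from any standard 2-SAT algorithm.
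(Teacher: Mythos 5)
Your proposal is correct and follows essentially the same approach as the paper: both construct a 2-SAT instance with one Boolean variable per resource (true meaning red-only, false meaning blue-only), add the clause $(z_{q_1}\lor z_{q_2})$ for each red agent and $(\lnot z_{q_1}\lor\lnot z_{q_2})$ for each blue agent, and verify the same two-way correspondence between satisfying assignments and monochromatic strategy profiles. The only cosmetic difference is that the paper additionally covers degree-one agents via unit clauses, which your reading of the degree-$2$ hypothesis renders unnecessary.
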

\ifdefined\arxiv
\begin{proof}
    We show this by constructing a 2SAT formula $\varphi$ which is true if and only if every agent can get utility $1$. Note that satisfying all agents implies that no two agents with different colors are assigned to the same resource.

    For each resource $q$, we introduce a variable $x_q$. We will interpret $x_q=$ true if $q$ is only used by red agents and $x_q=$ false if $q$ is only used by blue agents.
    
     For each red agent $a \in R$ with two choices $q_1$ and $q_2$, i.e., $(q_1,a)\in E$ and $(q_2,a)\in E$, we add the clause $(x_{q_1} \vee x_{q_2})$ to $\varphi$. If a red agent has only one available resource $q$, we add the clause $(q)$.
     For each blue agent $a \in R$ with two choices $q_1$ and $q_2$, i.e. $(q_1,a)\in E$ and $(q_2,a)\in E$, we add the clause $(\neg x_{q_1} \vee \neg x_{q_2})$ to $\varphi$. If a blue agent has only one available resource $q$, we add the clause $(\neg q)$.
    
    It is easy to see that $\varphi$ is satisfiable if and only if there is a monochromatic assignment $s$ of agents to resources: If there is a satisfying solution to $\varphi$, we assign each red agent to a resource with the corresponding variable set to true. Each blue agent is assigned to a resource with the corresponding variable set to false. By construction of $\varphi$, such resources exist.
    For the other direction: from a monochromatic assignment $s$, we can derive a satisfying assignment for $\varphi$ by setting each $x_q$ to true if and only if $q$ is used by red agents only.
    
    Finally, we use that 2SAT can be solved efficiently.
\end{proof}
\fi

A similar result holds if resource nodes have low degree.
\begin{theorem}
    A social optimum can be computed in polynomial time for $\tau \in [0,1]$ if the degree of each resource is $2$.
\end{theorem}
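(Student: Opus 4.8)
The plan is to reduce the optimization to a clean combinatorial problem on a multigraph and solve it componentwise. First I would dispose of the easy regime. Since each resource is used by at most two agents, the fraction of same-type users on any occupied resource is either $1$ (an agent alone or together with a same-type agent) or $\tfrac12$ (two different-type agents). Hence for $\tau \le \tfrac12$ every agent always obtains utility $\tau$, the social optimum is $n\tau$, and any profile attains it. So assume $\tau > \tfrac12$, where an agent gets utility $\tau$ unless it shares its resource with an agent of the other type, in which case it gets $\tfrac12$. Modeling agents as the vertices of a multigraph $H$ whose edges are the degree-two resources, a strategy profile is a choice of one incident edge per vertex, and a resource incurs a loss exactly when it is bichromatic and both of its agents select it. Writing $M$ for the number of such \emph{mixed} resources, I obtain
\[
W(\s) = n\tau - 2\left(\tau - \tfrac12\right) M,
\]
so maximizing welfare is equivalent to minimizing the number $M$ of bichromatic resources chosen by both adjacent agents.

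Next I would establish two structural facts. First, the set of resources selected by both of their agents forms a matching in $H$: a vertex lying in two such resources would have to select two distinct edges, which is impossible. Second, and consequently, $M=0$ holds precisely when the selection, viewed as a map from agents to resources, is injective, i.e.\ a system of distinct representatives (SDR). Since $M$ decomposes additively over the connected components of $H$, it suffices to minimize $M$ in each component separately. For a component containing a cycle (including a double edge between two agents) one has $|E| \ge |V|$, and I would verify Hall's condition to obtain an SDR: for any vertex set $S$ the number of incident edges is at least $|S|$, since the induced subgraph on $S$ has at least $|S|$ minus the number of its pieces internal edges, while connectivity of the component forces at least one further edge leaving each piece. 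Thus every cyclic component achieves $M=0$.

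The remaining, and genuinely delicate, case is a tree component, where $|E| = |V|-1$ forces at least one doubly-selected edge; the question is whether this forced coincidence can be steered onto a \emph{monochromatic} resource. I would show it can: rooting the tree at a vertex $r$ and letting every non-root vertex pick its parent edge makes the assignment injective on all edges except the single incident edge chosen by $r$, which is then the unique doubly-selected resource. Choosing $r$ and its selected edge to be an endpoint of a monochromatic edge places the coincidence on a same-type pair, contributing nothing to $M$. Hence a tree component forces a mixed pair if and only if all its resources are bichromatic, in which case it forces exactly one; otherwise $M=0$ is attainable. Summing over components yields $\min M$ equal to the number of components that are bichromatic trees, and every step — computing components, testing for a cycle, locating a monochromatic edge, and building the rooted assignment or SDR — is clearly polynomial.

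I expect the main obstacle to be the tree analysis: proving both that one coincidence is unavoidable and that it can always be relocated onto a monochromatic edge via the rooting argument, together with the Hall-condition check that makes cyclic components collision-free. A minor point to address is resources of degree $0$ or $1$, should they occur: a degree-one resource only ever yields its single agent utility $\tau$ and hence only makes that agent safer, so it can be treated as a private edge that saturates its agent for free without affecting the argument.
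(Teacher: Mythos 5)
Your proposal is correct, but it takes a genuinely different route from the paper. The paper's proof is algorithmic in two stages: it first greedily assigns and removes every resource that is accessible by a single agent or by same-colored agents only, leaving an instance in which each remaining resource is accessible by exactly one red and one blue agent; it then computes a maximum matching between agents and resources, and optimality follows from a counting argument showing that any profile with fewer than $n-k$ mixed resources (where $k$ is the maximum matching size) would yield a larger matching, a contradiction. You instead contract the instance into a multigraph $H$ with agents as vertices and resources as edges, reduce welfare maximization (for $\tau>\tfrac12$) to minimizing the number $M$ of doubly-selected bichromatic edges, and solve this component by component: cyclic components admit an injective selection (an SDR, or equivalently a pseudoforest orientation), while a tree component forces exactly one coincidence, which the rooting argument can steer onto a monochromatic edge whenever one exists. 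This buys a closed-form structural characterization of the optimum --- $\min M$ equals the number of components that are trees all of whose edges are bichromatic --- and an essentially linear-time, DFS-based algorithm, whereas the paper needs a bipartite-matching subroutine; the paper's argument, in exchange, is shorter and leans on an off-the-shelf tool. Two small blemishes in your write-up, neither fatal: the claim that $M=0$ holds ``precisely when'' the selection is injective is only the sufficiency direction (a doubly-selected monochromatic edge also contributes nothing, which your own tree analysis later exploits), and your Hall-condition sketch omits the case where a piece of the induced subgraph is the entire component, which is exactly the place where $|E|\ge |V|$ must be invoked instead of a leaving edge.
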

\begin{proof}
 
The following procedure yields an optimal strategy profile:
In the first step, we iteratively check for a resource that is accessible by only a single agent or by agents of the same color only. We assign those agents to that resource and remove the resource and those agents from the instance.
We are left with an instance in which every resource is accessible by exactly one red and one blue agent in its associated accessibility graph. 
In the second step, we compute a maximum matching in the accessibility graph and assign agents to resources according to the matching. Each remaining unmatched agent is assigned to an arbitrary accessible resource.

To prove optimality, we first observe that all agents assigned in step one have maximal utility and all resources removed in step one are not accessible by any agent left for step two.
It remains to show that the assignment of step two is optimal. To that end, let $n$ be the number of agents of the instance at the beginning of step two and let $k$ denote the cardinality of a maximum matching. Hence, our algorithm computes a profile with $n-k$ resources that each have two agents of different colors and $2k-n$ resources with exactly one agent. Assume this was not optimal, then there needs to be a solution with fewer than $n-k$ resources that have two differently colored agents. Hence, the total number of used resources needs to be larger, as $n$ agents still have to be assigned. However, this implies the existence of a matching of cardinality strictly larger than $k$ which yields a contradiction.

The algorithm can be implemented in polynomial time using a standard algorithm, e.g. \cite{hopcroft}.
\end{proof}

\section{Equilibrium Existence and Computation}\label{sec:existence}
In this section, we show that IBE exist in all instances and for all $\tau > 0$, since an ordinal potential function exists.
Improving response dynamics converge in a polynomial number of steps, but an even more efficient greedy algorithm exists to construct IBE directly from scratch.
\subsection{Impact-Blind Equilibria}
\begin{lemma}
\label{lem:social-welfare-increase}
For $\tau = 1$, an impact-blind improving move increases social welfare.
\end{lemma}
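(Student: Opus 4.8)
The plan is to track exactly which agents' utilities change and to exploit the clean closed form that social welfare takes when $\tau=1$. Since the colors are symmetric, I would assume the moving agent is red, switching from resource $q$ to resource $q'$. Write $r=\#_R(q,\s)$, $p=\#_B(q,\s)$, $c=\#_R(q',\s)$, $d=\#_B(q',\s)$; the mover being red on $q$ forces $r\ge 1$. Only agents on $q$ and $q'$ change utility, so the welfare change is $\Delta W = [f(r-1,p)+f(c+1,d)] - [f(r,p)+f(c,d)]$, where for $\tau=1$ the welfare contributed by a resource with $x$ red and $y$ blue users is $f(x,y)=\tfrac{x^2+y^2}{x+y}$ (each of the $x$ red users has utility $x/(x+y)$ and each blue user $y/(x+y)$). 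Thus $\Delta W$ is the gain from adding a red user to $q'$ minus the loss from removing one from $q$.

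The key computation is the marginal gain $g(x,y):=f(x+1,y)-f(x,y)$ of adding a red user. I would simplify it to the closed form $g(x,y)=1-\tfrac{2y^2}{(x+y)(x+y+1)}$, which remarkably depends only on the opposite-color count $y$ and the total $x+y$. With this, $\Delta W = g(c,d)-g(r-1,p)$, so the lemma reduces to the single inequality $g(c,d)>g(r-1,p)$, i.e. $\tfrac{d^2}{(c+d)(c+d+1)} < \tfrac{p^2}{(r+p-1)(r+p)}$.

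Finally I would feed in the hypothesis. An impact-blind improving move for the red mover means $\rho_R(q',\s)>\rho_R(q,\s)$, equivalently $\rho_B(q',\s)<\rho_B(q,\s)$; moreover a move can only improve a non-maximal utility, so $\rho_R(q,\s)<1$, giving $p\ge 1$ and hence $r+p\ge 2$. Writing $M=c+d$, $N=r+p$, $\beta=\rho_B(q',\s)=d/M$, $\beta_0=\rho_B(q,\s)=p/N$, the two sides become $\tfrac{\beta^2 M}{M+1}$ and $\tfrac{\beta_0^2 N}{N-1}$, and the desired inequality follows from the chain $\tfrac{\beta^2 M}{M+1}\le \beta^2 < \beta_0^2 \le \tfrac{\beta_0^2 N}{N-1}$, using $M/(M+1)<1<N/(N-1)$ together with $0\le\beta<\beta_0$. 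The empty-target case ($q'$ unused, gain equal to $1$) is handled identically, since the impact-blind condition $u(a_i,\s)<1$ again forces $p\ge 1$ and hence the removal loss $g(r-1,p)<1$.

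The main obstacle is the middle step: the raw expression for $\Delta W$ as a difference of two marginal terms looks asymmetric and unwieldy, and it is not a priori clear that the fraction hypothesis controls it. The decisive simplification is the closed form $g(x,y)=1-\tfrac{2y^2}{(x+y)(x+y+1)}$; once the gain is seen to depend only on the opposite-color count and the total size, reducing to a comparison of blue-fractions scaled by the mild size factors $M/(M+1)$ and $N/(N-1)$ is routine.
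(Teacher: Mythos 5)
Your proposal is correct and takes essentially the same approach as the paper: your marginal-gain closed form $g(x,y)=1-\tfrac{2y^2}{(x+y)(x+y+1)}$ is exactly the paper's decomposition $\tfrac{x^2+y^2}{x+y}=x-y+\tfrac{2y^2}{x+y}$ applied before and after the move, and your final chain $\tfrac{\beta^2 M}{M+1}\le\beta^2<\beta_0^2\le\tfrac{\beta_0^2 N}{N-1}$ is the paper's bounding of the two terms by $\tfrac{b_1^2}{(r_1+b_1)^2}$ and $\tfrac{b_2^2}{(r_2+b_2)^2}$ followed by the improving-move comparison of blue fractions. Your explicit handling of the empty-target resource and of the degenerate case via $p\ge 1$ is a slightly more careful writeup of edge cases the paper treats tersely, but the argument is the same.
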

\begin{proof}
Let, w.l.o.g., a red agent make an impact-blind improving move from resource $q$ to $q'$, changing the strategy profile from $\s$ to $\s'$.
Let $r_1 = \#_R(q, \s)$, $b_1 = \#_B(q, \s)$, $r_2 = \#_R(q', \s)$ and $b_2 = \#_B(q', \s)$.

The total social welfare of the agents on $q$ in $\s$ is $W_q(\s) = \frac{r_1^2 + b_1^2}{r_1 + b_1} = r_1 - b_1 + 2 \left(\frac{b_1^2}{r_1 + b_1}\right)$ and in $\s'$ it is $W_q(\s') = \frac{(r_1-1)^2 + b_1^2}{r_1 + b_1 - 1} = r_1 - b_1 - 1 + 2 \left(\frac{b_1^2}{r_1 + b_1 - 1}\right)$.
If $r_1 + b_1 - 1 = 0$, then $a$ would not improve by switching to $q'$.
Thus, the change of welfare of the agents on $q$ is
$W_q(\s') - W_q(\s) = 2 \left(\frac{b_1^2}{r_1 + b_1 - 1}\right) - 2 \left(\frac{b_1^2}{r_1 + b_1}\right) - 1 = 2\left(\frac{b_1^2}{(r_1 + b_1 - 1)(r_1 + b_1)}\right)- 1$.

Similarly, the total social welfare of the agents on $q'$ in $\s$ is $W_{q'}(\s) = \frac{r_2^2 + b_2^2}{r_2 + b_2} = r_2 - b_2 + 2 \left(\frac{b_2^2}{r_2 + b_2}\right)$ and in $\s'$ it is $W_{q'}(\s') = \frac{(r_2+1)^2 + b_2^2}{r_2 + b_2 + 1} = r_2 - b_2 + 1 + 2 \left(\frac{b_2^2}{r_2 + b_2 + 1}\right)$.
Thus, the change of welfare of the agents on $q$ is
$W_{q'}(\s') - W_{q'}(\s) = 2 \left(\frac{b_2^2}{r_2 + b_2 + 1}\right) - 2 \left(\frac{b_2^2}{r_2 + b_2}\right) + 1 = -2\left(\frac{b_2^2}{(r_2 + b_2 + 1)(r_2 + b_2)}\right)+ 1$.

Therefore, the total difference is
$W(\s') - W(\s) = 2\left(\frac{b_1^2}{(r_1 + b_1 - 1)(r_1 + b_1)} -\frac{b_2^2}{(r_2 + b_2 + 1)(r_2 + b_2)}\right) > 2\left(\frac{b_1^2}{(r_1 + b_1)^2} -\frac{b_2^2}{(r_2 + b_2)^2}\right)$.
Since the move is improving, we have $\frac{b_1}{r_1+b_1} > \frac{b_2}{r_2+b_2}$ and therefore $W(\s') - W(\s) > 0$.
\end{proof}

Note that \Cref{lem:social-welfare-increase} does not hold for impact-aware improving moves. (See \Cref{thm:pos-not-one}.) With the difference considered in the proof of \Cref{lem:social-welfare-increase}, we can also bound the number of steps needed to reach an IBE.
\begin{lemma}
\label{polytimesteps}
    For $\tau = 1$, an IBE is reached after $O(n^5)$ impact-blind improving moves.
\end{lemma}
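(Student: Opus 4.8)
The plan is to use the social welfare $W$ as an ordinal potential, as justified by \Cref{lem:social-welfare-increase}, and to bound the number of moves by dividing the total range of $W$ by the minimum welfare gain of a single move. Since every agent's utility is at most $\tau=1$, we have $W(\s)\le n$ for every profile; and since a resource with $r$ red and $b$ blue users contributes $\frac{r^2+b^2}{r+b}\ge\frac{r+b}{2}$ (equivalently $(r-b)^2\ge 0$), summing over resources gives $W(\s)\ge\frac n2$. Hence the range of $W$ is at most $\frac n2$, and it suffices to show that each impact-blind improving move increases $W$ by at least $\Omega(n^{-4})$.

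First I would recall the exact expression from the proof of \Cref{lem:social-welfare-increase}. Writing $n_1=r_1+b_1$ and $n_2=r_2+b_2$ for the number of agents on the source resource $q$ and the target resource $q'$ before a (w.l.o.g.\ red) move, that proof already gives
$$W(\s') - W(\s) > 2\left(\frac{b_1^2}{n_1^2} - \frac{b_2^2}{n_2^2}\right).$$
The crucial step is to make this quantitative via integrality. Factoring,
$$\frac{b_1^2}{n_1^2} - \frac{b_2^2}{n_2^2} = \frac{(b_1 n_2 - b_2 n_1)(b_1 n_2 + b_2 n_1)}{n_1^2 n_2^2}.$$
Because the move is improving we have $\frac{b_1}{n_1} > \frac{b_2}{n_2}$, so $b_1 n_2 - b_2 n_1$ is a positive integer and hence at least $1$. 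Moreover, an improving red move forces $b_1\ge 1$ (otherwise the agent already has utility $1$ and cannot improve at $\tau=1$), so $b_1 n_2 + b_2 n_1 \ge b_1 n_2 \ge 1$. As $n_1,n_2\le n$, the denominator is at most $n^4$, giving a gain of more than $\frac{2}{n^4}$ per move.

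The only case not covered by the formula above is a move to an empty target $q'$ (where $n_2=0$); here I would argue directly that the contribution of $q'$ jumps from $0$ to $1$ while the contribution of $q$ changes by $\frac{2b_1^2}{(n_1-1)n_1}-1$, for a net gain of $\frac{2b_1^2}{(n_1-1)n_1}\ge\frac{2}{n^2}\ge\frac{2}{n^4}$ since $b_1\ge 1$. Combining both cases, every impact-blind improving move raises $W$ by more than $\frac{2}{n^4}$, so the number of moves is at most $\frac{n/2}{2/n^4}=\frac{n^5}{4}=O(n^5)$. I expect the main obstacle to be exactly this integrality argument yielding the $\frac{1}{n^4}$ bound: the estimate $W(\s')-W(\s)>0$ from \Cref{lem:social-welfare-increase} is not quantitative by itself, and one must exploit that the two relevant type-fractions are rationals with denominators at most $n$, whose strict ordering forces a gap of at least $\frac{1}{n_1 n_2}$; the empty-resource edge case needs separate, though routine, handling.
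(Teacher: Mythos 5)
Your proof is correct and takes essentially the same route as the paper: it reuses the welfare-change bound from \Cref{lem:social-welfare-increase}, exploits integrality of the numerator to obtain a per-move gain greater than $\frac{1}{n^4}$, and divides the welfare range by this gain to get $O(n^5)$ moves. Your only deviations are refinements rather than a different approach: the explicit difference-of-squares factoring (the paper simply asserts the numerator is a positive integer), the separate treatment of a move to an empty resource (an edge case the paper's proof silently skips, since there the numerator would be $0$), and the unneeded lower bound $W(\s)\ge \frac{n}{2}$.
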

\begin{proof}
As seen in \Cref{lem:social-welfare-increase}, an improving move increases social welfare by 
$W(\s') - W(\s) > 2\left(\frac{b_1^2}{(r_1 + b_1)^2} -\frac{b_2^2}{(r_2 + b_2)^2}\right) = 2\frac{b_1^2(r_2 + b_2)^2 - b_2^2(r_1 + b_1)^2}{(r_1 + b_1)^2(r_2 + b_2)^2}$.

Also by \Cref{lem:social-welfare-increase}, the numerator is a positive integer, so $W(\s') - W(\s) > \frac{1}{n^4}$.
Since the social welfare is in $[0,n]$, its maximum is reached after at most $n^5$ steps.
\end{proof}

Since an impact-blind improving move for a threshold $\tau < 1$ is also an impact-blind improving move for $\tau = 1$ and since we can compute an improving response in $O(m)$, we get:
\begin{theorem}
\label{thm:fip}
For any $\tau \in [0,1]$, the \game{} possesses the IB-FIP and an IBE can be computed with runtime $O(n^5m)$.
\end{theorem}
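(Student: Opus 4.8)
The plan is to bootstrap from the $\tau=1$ case, which \Cref{lem:social-welfare-increase} and \Cref{polytimesteps} already settle, by showing that lowering the threshold can only \emph{remove} improving moves. Concretely, I would first establish the following reduction claim: every impact-blind improving move for a threshold $\tau<1$ is also an impact-blind improving move for $\tau=1$.

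To verify this, consider (w.l.o.g., by the red/blue symmetry of the utility) a red agent switching from $q$ to a nonempty resource $q'$, so that the $\tau$-improving condition reads $\min\{\rho_R(q',\s),\tau\} > \min\{\rho_R(q,\s),\tau\}$. Since the left-hand side is at most $\tau$, this inequality forces $\min\{\rho_R(q,\s),\tau\} < \tau$, i.e.\ $\rho_R(q,\s)<\tau$, so the right-hand side equals $\rho_R(q,\s)$. A short case distinction on whether $\rho_R(q',\s)\ge\tau$ or $\rho_R(q',\s)<\tau$ then gives $\rho_R(q',\s) > \rho_R(q,\s)$ in both cases, which is exactly the improving condition for $\tau=1$. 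The move-to-empty-resource case is even simpler: such a move requires $u(a,\s)<\tau\le 1$, so it is improving for $\tau=1$ as well. These two corner cases (the capping at $\tau$ and the empty resource) are the only subtle points, and verifying them is the main, and essentially only, obstacle of the whole argument.

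With the claim in hand, the two desired conclusions follow quickly. For the IB-FIP, observe that any sequence of impact-blind $\tau$-improving moves is, by the claim, also a sequence of impact-blind $1$-improving moves; by \Cref{polytimesteps} every such sequence has length $O(n^5)$ and hence terminates. Consequently no infinite $\tau$-improving sequence can exist, so the IB-FIP holds for every $\tau\in[0,1]$ (with $\tau=0$ trivial, since then all utilities are $0$ and no move is improving), and moreover impact-blind dynamics reach an IBE within $O(n^5)$ moves.

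For the runtime, I would run impact-blind dynamics from an arbitrary starting profile while maintaining the counts $\#_R(q,\s)$ and $\#_B(q,\s)$ for every resource, so that each accepted move updates these counts in $O(1)$. Locating an improving response---scanning each agent's incident edges and comparing the same-type fraction on each accessible resource against that agent's current utility---then costs $O(m)$ per round, as asserted just before the theorem. Combining the $O(n^5)$ bound on the number of moves with the $O(m)$ cost of finding each move yields the claimed $O(n^5 m)$ running time.
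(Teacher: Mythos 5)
Your proposal is correct and follows essentially the same route as the paper, which likewise reduces to the $\tau=1$ case by observing that any impact-blind improving move for $\tau<1$ is also one for $\tau=1$, and then combines the $O(n^5)$ bound of \Cref{polytimesteps} with the $O(m)$ cost of finding an improving response. The only difference is that you explicitly verify the reduction claim (the capping and empty-resource cases), which the paper states without proof.
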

Note that the social welfare at $\tau=1$ is always an ordinal potential function, independently of the actual value of $\tau$.%
\ifdefined\arxiv
{ }However, social welfare is not a potential function for the impact-aware setting.
\begin{observation}
For $\tau = 1$ an impact-aware improving move may decrease social welfare.
\end{observation}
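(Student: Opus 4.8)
The claim is an existence statement, so the plan is to exhibit a single small instance together with one impact-aware improving move that lowers social welfare at $\tau = 1$. The natural starting point is the exact welfare-change identity established inside the proof of \Cref{lem:social-welfare-increase}: when a red agent moves from $q$ (carrying $r_1$ red and $b_1$ blue agents, counting herself) to $q'$ (carrying $r_2$ red and $b_2$ blue agents), the social welfare changes by exactly $2\bigl(\tfrac{b_1^2}{(r_1+b_1-1)(r_1+b_1)} - \tfrac{b_2^2}{(r_2+b_2+1)(r_2+b_2)}\bigr)$. Since \Cref{lem:social-welfare-increase} shows this quantity is positive for impact-blind moves, any counterexample must be an impact-aware move that is \emph{not} impact-blind. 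Concretely, I would look for a target $q'$ on which the current red fraction $\rho_R(q',\s)$ is at most the mover's current utility $u_1 = r_1/(r_1+b_1)$ (so the move fails the impact-blind test), while the red fraction after she joins, $\tfrac{r_2+1}{r_2+b_2+1}$, strictly exceeds $u_1$ (so the move is impact-aware improving), and then tune the integers so that the displayed difference is negative.

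The witness I would present has two resources: resource $q$ used by $2$ red and $3$ blue agents, and resource $q'$ used by a single blue agent, where one red agent on $q$ additionally has access to $q'$. At $\tau = 1$ the mover's utility on $q$ is $\tfrac25$; since $\rho_R(q',\s) = 0 \le \tfrac25$ the move is not impact-blind, whereas after she joins $q'$ her utility becomes $\tfrac12 > \tfrac25$, so it is an impact-aware improving move. I would then verify directly that $W(\s) = \tfrac{13}{5}+1 = \tfrac{18}{5}$ and $W(\s') = \tfrac52 + 1 = \tfrac72$, a strict decrease of $\tfrac{1}{10}$; equivalently, substituting $r_1=2,\,b_1=3,\,r_2=0,\,b_2=1$ into the identity gives $2\bigl(\tfrac{9}{20}-\tfrac12\bigr) = -\tfrac{1}{10}$.

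The only genuine obstacle is choosing parameters so the two competing effects carry the right sign. The intuition is that the lone blue agent on $q'$ is perfectly happy with utility $1$, and the red agent's arrival halves it; this concentrated loss must outweigh the diffuse gains, namely the mover's own modest improvement together with the relief enjoyed by her former blue co-residents on $q$ when she leaves. The welfare-change identity pins this down: one wants the mover to be a sufficiently small minority on $q$ (so her departure barely benefits the others) while the target is as monochromatic as possible in the opposite color (so her arrival is maximally damaging). A short search over small integer parameters confirms that $2$ red and $3$ blue on $q$ with a single blue on $q'$ is in fact the smallest such instance, which makes it a clean witness for the observation.
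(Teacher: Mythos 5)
Your proof is correct, and it follows the same basic strategy as the paper: exhibit a concrete two-resource instance with one impact-aware improving move whose welfare change, computed via the identity from the proof of \Cref{lem:social-welfare-increase}, is negative. The comparison is nevertheless instructive, because your witness is actually sounder than the paper's. The paper's stated instance ($100$ agents on $q$ with red fraction $\tfrac{19}{100}$, $10$ agents on $q'$ with red fraction $\tfrac{1}{10}$) is defective as written: after the red agent joins $q'$ her utility is $\tfrac{2}{11} < \tfrac{19}{100}$, so the described move is not an improving move at all (neither impact-aware nor impact-blind), even though the welfare change $-\tfrac{81}{550}$ is computed correctly for that (non-improving) move. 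Your instance ($2$ red and $3$ blue on $q$, one blue on $q'$) satisfies every requirement: the move is impact-aware improving since $\tfrac12 > \tfrac25$, it fails the impact-blind test since $0 \le \tfrac25$ (as any such counterexample must, by \Cref{lem:social-welfare-increase}), and social welfare drops from $\tfrac{18}{5}$ to $\tfrac72$, matching the identity evaluation $2\bigl(\tfrac{9}{20}-\tfrac12\bigr) = -\tfrac{1}{10}$. Your derivation also makes explicit \emph{why} such examples exist---the move must evade impact-blindness, so the target must look bad before the move but good after, and the concentrated loss of the lone agent on the target must outweigh the diffuse gains elsewhere---which the paper leaves implicit. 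The closing minimality claim is not needed for the observation and could be dropped, but the core argument stands and is, if anything, a cleaner and correct replacement for the paper's own proof.
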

\begin{proof}
Let $a$ be a red agent using a resource $q$ with $\rho_R(q, \s) = \frac{19}{100}$ and $\#(q,s) = 100$.
Let there be an improving move for $a$ to a resource $q'$ with $\rho_R(q, \s) = \frac{1}{10}$ and $\#(q',s) = 10$.
Then the change in social welfare is $\frac{18^2+81^2}{99} + \frac{2^2+9^2}{11} - \frac{19^2+81^2}{100} - \frac{1^2+9^2}{10} = -\frac{81}{550}$.
\end{proof}
\else%
\footnote{In the full version~\cite{homophilic_arxiv}, we show that for $\tau<1$ the social welfare is not necessarily a potential function.}
\fi

We provide \Cref{alg:greedy-eq} to greedily compute an IBE, i.e., we can circumvent finding equilibria via expensive improving response dynamics.
However, we still think that the IB-FIP is important as the agents can also find an equilibrium independently.
Intuitively, our algorithm checks which resource can achieve the highest red fraction, taking into account the blue agents that have only one resource available.
Then it assigns all possible red agents and the necessary blue agents to that resource and removes it from the instance.
\begin{algorithm}[t]
    \caption{computeEquilibrium(G)}
    \label{alg:greedy-eq}
    \While{$Q$ not empty}{
        assign all $a \in B$ with $|X(a)| = 1$\;
        $q \gets \argmax_{q \in Q}{\frac{|R \cap Y(q)|}{|B \cap \text{assigned}(q)|}}$\;
        assign all $a \in R \cap Y(q)$ to $q$\;
        remove $q$ and its assigned agents from $G$\;
    }
\end{algorithm}
We show, that the algorithm removes the resources sorted by their fraction of red agents in the resulting equilibrium.

\begin{lemma}
\label{lem:removal-order}
    If \Cref{alg:greedy-eq} producing the IBE $\s$ removes resource $q_1$ before resource $q_2$, then $\rho_R(q_1, \s) \geq \rho_R(q_2, \s)$.
\end{lemma}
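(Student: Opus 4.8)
The plan is to track, for each resource, the number of available red agents and of forced blue agents associated with it over the course of \Cref{alg:greedy-eq}, and to exploit two monotonicity properties: the available red agents on a resource can only shrink over time, while the forced blue agents assigned to it can only grow. First I would observe that the quantity maximized in line~3, namely $\frac{|R\cap Y(q)|}{|B\cap\text{assigned}(q)|}$, is a strictly monotone transform of the red fraction $\frac{|R\cap Y(q)|}{|R\cap Y(q)|+|B\cap\text{assigned}(q)|}$ (treating the value as $1$ when no forced blue agent is present). Since line~4 assigns exactly the currently available red agents to the selected resource and then removes it, no further agents are ever added to it; hence the final red fraction $\rho_R(q,\s)$ of a selected resource equals the value of this fraction at the moment $q$ is chosen. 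Consequently, the algorithm selects in each iteration a resource of currently maximum $\rho_R$-value.

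Now let $q_1$ be removed at iteration $t_1$ and $q_2$ at iteration $t_2>t_1$, and write $r_2^{(t)}$ and $b_2^{(t)}$ for the number of available red agents and assigned (forced) blue agents on $q_2$ at iteration $t$. Since $q_2\in Q$ still holds at iteration $t_1$, the argmax step gives
\[
\rho_R(q_1,\s)\;\geq\;\frac{r_2^{(t_1)}}{r_2^{(t_1)}+b_2^{(t_1)}}.
\]
It therefore suffices to show that the red fraction of $q_2$ does not increase between $t_1$ and $t_2$, i.e., that $\frac{r_2^{(t_1)}}{r_2^{(t_1)}+b_2^{(t_1)}}\geq \rho_R(q_2,\s)$, after which chaining the two inequalities proves the claim.

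For the monotonicity I would argue two things. Red agents are only ever removed from the instance when assigned to a selected resource, so $r_2^{(t)}=|R\cap Y(q_2)|$ is non-increasing in $t$, giving $r_2^{(t_2)}\le r_2^{(t_1)}$. For the blue agents, once a blue agent has $q_2$ as its only accessible resource it stays forced to $q_2$ and remains assigned to $q_2$ until $q_2$ itself is removed, while removing other resources can only turn additional blue agents into forced ones; hence $b_2^{(t)}$ is non-decreasing, giving $b_2^{(t_2)}\ge b_2^{(t_1)}$. A short computation, $\frac{r}{r+b}-\frac{r'}{r'+b'}=\frac{rb'-r'b}{(r+b)(r'+b')}\ge 0$ whenever $r\ge r'\ge 0$ and $b'\ge b\ge 0$, then yields $\frac{r_2^{(t_1)}}{r_2^{(t_1)}+b_2^{(t_1)}}\geq \frac{r_2^{(t_2)}}{r_2^{(t_2)}+b_2^{(t_2)}}=\rho_R(q_2,\s)$, as desired.

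The main obstacle I anticipate is making the blue-agent monotonicity airtight: I must verify that a blue agent counted in $b_2^{(t)}$ is genuinely never removed before $q_2$ (it is removed only together with the resource it is assigned to, which is $q_2$) and that the bookkeeping of line~2 correctly reassigns newly forced blue agents across iterations without double counting. The edge cases where a denominator vanishes, i.e., resources with no forced blue agent (red fraction~$1$) or with no available red agent (red fraction~$0$), also need to be folded into the monotone-transform argument so that the argmax comparison and the final inequality remain valid.
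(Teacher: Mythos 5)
Your proof is correct and takes essentially the same approach as the paper's: the paper's (much terser) argument rests on exactly the two monotonicity facts you establish, namely that the assignable red agents $|R \cap Y(q)|$ only shrink and the assigned blue agents only grow, combined with the argmax selection step. The extra details you supply (the monotone-transform observation relating the ratio $\frac{|R\cap Y(q)|}{|B\cap\text{assigned}(q)|}$ to the red fraction, and the explicit chaining of inequalities through the iteration at which $q_1$ is removed) simply spell out what the paper leaves implicit.
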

\begin{proof}
    While running the algorithm, for each resource $q$, the number of assigned blue agents to $q$ monotonically increases and the number $|R \cap Y(q)|$ of red agents assignable to $q$ monotonically decreases.
    Thus, after the removal of $q_1$, no resource $q_2$ with a higher fraction can be removed.
\end{proof}

\begin{theorem}
    For $\tau \in [0,1]$, \Cref{alg:greedy-eq} computes an IBE in runtime $O((m + k) \log k)$.
\end{theorem}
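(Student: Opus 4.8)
The plan is to prove two things separately: that the profile $\s$ returned by \Cref{alg:greedy-eq} is an IBE, and that the stated running time is achievable with the right data structures. For correctness I would first record two structural facts about $\s$. (i) If a red agent $a$ is assigned to $q$, then every resource in $X(a)\setminus\{q\}$ is removed strictly \emph{after} $q$: otherwise, at the moment such a resource $q''$ is removed, $a$ would still be present and incident to $q''$, so the line ``assign all $a\in R\cap Y(q)$ to $q$'' would have placed $a$ on $q''$ rather than on $q$. (ii) Every blue agent $a$ is placed by the forcing step onto the last-removed resource of $X(a)$, since a blue agent is assigned only once its accessible set has shrunk to a single resource, which then survives until it is selected. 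I would then use \Cref{lem:removal-order}, which orders removals by non-increasing $\rho_R$.

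Given these facts, ruling out improving moves to \emph{non-empty} resources is immediate. For a red agent $a$ on $q$, any alternative $q'\in X(a)$ is removed after $q$ by (i), so $\rho_R(q',\s)\le\rho_R(q,\s)$ by \Cref{lem:removal-order}, giving $\min\{\rho_R(q',\s),\tau\}\le u(a,\s)$; no impact-blind move exists. Symmetrically, for a blue agent $a$ on $q$, every $q'\in X(a)\setminus\{q\}$ is removed before $q$ by (ii), so $\rho_R(q',\s)\ge\rho_R(q,\s)$, hence $\rho_B(q',\s)\le\rho_B(q,\s)$, and again the move is not improving.

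The subtle part, which I expect to be the main obstacle, is ruling out moves to \emph{empty} resources, which are improving whenever the mover's utility is below $\tau$. Here \Cref{lem:removal-order} alone does not suffice; I would argue directly from the greedy selection criterion, using that a resource which ends up empty never carried a forced blue (forced blues persist until their resource is removed) and that $|R\cap Y(q)|$ only decreases while $|B\cap\mathrm{assigned}(q)|$ only increases during the run. Suppose a red agent $a$ on $q$ has $u(a,\s)<\tau$; then $\rho_R(q,\s)<1$, so $q$ carries a forced blue and its selection ratio $\mathrm{ratio}(q)=\tfrac{|R\cap Y(q)|}{|B\cap\mathrm{assigned}(q)|}$ is finite. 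An accessible $q'$ ending up empty would, when $q$ is selected, still be incident to $a$ and carry no forced blue, hence have $\mathrm{ratio}(q')=+\infty>\mathrm{ratio}(q)$, contradicting the greedy choice of $q$. The blue case is analogous: for an unsatisfied blue $a$ on $q$ there are red agents on $q$, so $\mathrm{ratio}(q)>0$ at all times before $q$ is removed; an accessible resource ending empty has selection ratio $0$ and so can only be selected after $q$, contradicting (ii). Combined with the previous paragraph this shows $\s$ is an IBE for every $\tau\in[0,1]$ (the case $\tau=0$ being vacuous, as no utility falls below $\tau$).

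For the running time I would keep a max-heap over the resources keyed by $\tfrac{|R\cap Y(q)|}{|B\cap\mathrm{assigned}(q)|}$, comparing keys by cross-multiplication and treating a zero denominator as $+\infty$, so no floating-point arithmetic is needed. Alongside I maintain the red count $|R\cap Y(q)|$ of each resource and the current degree $|X(a)|$ of each blue agent. Selecting a resource is one heap extraction; removing it deletes its incident edges, and each edge deletion either decrements one resource's red count or one blue agent's degree, each triggering at most one key update and, when a blue degree reaches $1$, one forcing update. Since every edge is deleted exactly once, there are $O(m)$ key updates and $O(k)$ extractions, each costing $O(\log k)$, for a total of $O((m+k)\log k)$. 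The point to verify carefully is that the amortized number of heap operations is bounded by the number of edges rather than by the number of $\argmax$ recomputations, which is exactly what this incremental maintenance achieves.
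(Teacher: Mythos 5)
Your treatment of non-empty target resources is exactly the paper's argument (your facts (i) and (ii) plus \Cref{lem:removal-order}), and your runtime accounting matches the paper's as well. The genuine problem lies in the step you yourself flag as the main obstacle: moves to empty resources, specifically the blue-agent case. Your argument there requires that a resource which ends up empty --- i.e., one that at its selection time has no incident red agents and no assigned blue agents, so its key is $0/0$ --- has the \emph{lowest} priority in the $\argmax$. But your data structure stipulates the opposite: ``treating a zero denominator as $+\infty$'' makes the key of such a resource $+\infty$ (and comparison by cross-multiplication makes $0/0$ tie with every other key, so the order is not even well defined). The two halves of your proposal contradict each other, and the choice matters: under your stated convention the claimed statement fails. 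Take $Q=\{q_1,q_2\}$, one red agent adjacent only to $q_1$, one blue agent adjacent to both, and $\tau=1$. Both resources have a zero denominator, hence key $+\infty$; if the tie is broken toward $q_2$, the algorithm removes $q_2$ empty, then forces the blue agent onto $q_1$, and outputs the profile in which $q_1$ hosts one red and one blue agent while $q_2$ is empty. The blue agent has utility $\tfrac12<\tau$ and an accessible empty resource, which by definition gives her an impact-blind improving move, so the output is not an IBE.

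The fix is to make the convention match the argument you actually run: a resource with at least one incident red agent and no assigned blue agents gets key $+\infty$ (this is what your red-agent case needs), whereas a resource with no incident red agents and no assigned blue agents gets the minimum key (this is what your blue-agent case needs). With that amendment your correctness argument goes through, and it is then actually more complete than the paper's own proof, which only rules out moves to non-empty resources via \Cref{lem:removal-order} (red agents cannot access resources removed before theirs; blue agents are forced onto the last accessible resource) and is silent about moves to empty resources, where $\rho_R$ is undefined and the lemma does not apply.
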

\begin{proof}
    For the runtime, we build a data structure in which for each resource $q$ we store the number of assigned blue agents and the number of red agents with an edge to $q$, with runtime $O(m+k)$.
    Updating this data structure and finding the blue agents can be done in amortized $O(m)$ time, as we only have to do an operation for each edge that is removed from the instance.
    Additionally, we maintain a Fibonacci heap~\cite{fibonacci-heap} of resources, to select the next resource to be removed.
    This needs $O(m)$ decrease-key operations ($O(\log{k})$) for each of the updates of the aforementioned data structure.
    For extracting the $k$ resources we need $k$ extract-max operations ($O(\log{k})$).

    A red agent does not want to change strategies, as she cannot access the resources removed before her assignment and, by \Cref{lem:removal-order}, all other resources have a worse fraction. For blue agents, the argument is analogous. 
\end{proof}
Note that \Cref{alg:greedy-eq} is not correct for IAE.%
\ifdefined\arxiv
\begin{observation}
\Cref{alg:greedy-eq} is incorrect for IAE.
\end{observation}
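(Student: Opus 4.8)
The plan is to refute correctness by exhibiting a single instance on which \Cref{alg:greedy-eq} outputs a profile that is an IBE (as its correctness theorem guarantees) but is \emph{not} an IAE. The conceptual starting point is the gap between the two notions: a profile fails to be an IAE exactly when some agent can move to a resource whose current same-type fraction does \emph{not} exceed her utility (so there is no impact-blind move) but strictly rises above it once she joins. Since \Cref{alg:greedy-eq} treats the two colors asymmetrically---it greedily piles red agents onto the resource maximizing the red fraction and only ever commits a blue agent once she becomes degree~$1$---the cleanest violation is caused by a \emph{blue} mover of degree~$2$ who is committed late to a high-blue-fraction resource while retaining access to a lower-blue-fraction one.

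Concretely, I would fix $\tau = 1$ and take two resources $q_1,q_2$ with: one red $r_1$ and one blue $b_1$ accessible only to $q_1$; three reds $r_2,r_3,r_4$ and four blues $c_1,\dots,c_4$ accessible only to $q_2$; and one extra blue $b^\ast$ accessible to both $q_1$ and $q_2$. Tracing the algorithm: in the first round $q_1$ has ratio $\tfrac{1}{1}=1$ while $q_2$ has ratio $\tfrac{3}{4}$ (crucially, $b^\ast$, being degree~$2$, is not yet counted as assigned to either resource), so $q_1$ is removed first together with $r_1,b_1$; this drops $b^\ast$ to degree~$1$, so in the second round $b^\ast$ is committed to $q_2$ along with $r_2,r_3,r_4,c_1,\dots,c_4$. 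The resulting profile $\s$ has $q_1=\{r_1,b_1\}$ with $\rho_B(q_1,\s)=\tfrac12$ and a $q_2$ with $\rho_B(q_2,\s)=\tfrac58$.

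The verification then has two parts. First, $b^\ast$ has an impact-aware improving move from $q_2$ to $q_1$: joining $q_1$ makes it one red and two blue, raising her same-type fraction to $\tfrac23 > \tfrac58$, so $\s$ is not an IAE. Second, this move is \emph{not} impact-blind, since the perceived fraction $\rho_B(q_1,\s)=\tfrac12$ lies below her current utility $\tfrac58$; checking the remaining agents (all of degree~$1$ and hence unable to move) confirms that $\s$ is indeed an IBE, consistent with the correctness guarantee for \Cref{alg:greedy-eq}.

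The main obstacle is making the greedy removal order come out as intended despite the mover's extra edge, which is exactly what kills the more obvious candidate constructions. If the mover were a \emph{red} agent, its cross-edge would inflate the target's ratio and pull it (and the whole red pile) to the wrong resource; in fact a short inequality argument shows that for a red mover the condition ``$q_1$ is removed first'' directly contradicts the condition for the move to be improving, so no red-mover counterexample exists at $\tau=1$. Routing the cross-access through a \emph{blue} mover sidesteps this, because a degree-$2$ blue contributes to no resource's ratio until it is forced. The only remaining arithmetic is then to choose the multiplicities so that the target's blue fraction is large enough to leave no impact-blind move yet small enough that joining $q_1$ is a strict gain, which the numbers above satisfy without any reliance on tie-breaking.
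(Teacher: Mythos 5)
Your counterexample is correct and follows essentially the same approach as the paper's own proof: a two-resource instance with $\tau=1$ where a degree-$2$ blue agent's alternative resource is greedily removed first, leaving her committed to the other resource with an impact-aware (but not impact-blind) improving move back. The paper's instance is slightly smaller (one red on $q_1$, three reds and one blue on $q_2$, giving the mover an improvement from $\tfrac25$ to $\tfrac12$), while yours adds a blue agent on $q_1$ so the greedy ratio comparison avoids a division by zero, but the mechanism is identical.
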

\begin{proof}
Let $Q=\{q_1, q_2\}$ and $\tau=1$.
We have one red agent $a_r$ connected to $q_1$ and 3 red agents connected to $q_2$.
We also have one blue agent $b_r$ connected to both $q_1$ and $q_2$, while another blue agent is connected to only $q_2$.
The algorithm removes $q_1$ first with only $a_r$ assigned.
All other agents are assigned to $q_2$.
In this state, however, $b_r$ can make an impact-aware improving move by switching to $q_1$ changing her utility from $\frac25$ to $\frac12$.
\end{proof}
\else%
\footnote{See~\cite{homophilic_arxiv} for a counterexample.}
\fi

\subsection{Impact-Aware Equilibria}
For impact-awareness, we show the existence of an equilibrium for $\tau \leq \frac12$ by using a potential argument.
First, we give an ordinal potential function that always remains constant or increases with an improving move.
The function is the sum of majority sizes over all resources.
\begin{lemma}
\label{lem:majority-potential}
    With an impact-aware improving move in the \game{}, the potential function $\Phi(\s) = \sum_{q \in Q}{\max\{\#_R(q, \s), \#_B(q, \s)\}}$ never decreases, for all possible values of $\tau$. (However, it is possible that it does not change.)
    The number of steps increasing $\Phi(\s)$ in a sequence of improving moves is limited.
\end{lemma}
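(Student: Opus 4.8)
The plan is to prove the two assertions separately: first that $\Phi$ is a weak ordinal potential (never decreasing on improving moves, possibly constant), then that integrality together with boundedness caps the number of strictly increasing steps. For the first part I would fix, without loss of generality, a red agent $a$ performing an impact-aware improving move from a resource $q$ to a resource $q'$ (the blue case being symmetric), and abbreviate the counts before the move by $r_1=\#_R(q,\s),\ b_1=\#_B(q,\s)$ on $q$ and $r_2=\#_R(q',\s),\ b_2=\#_B(q',\s)$ on $q'$. Only the summands of $\Phi$ attached to $q$ and $q'$ change, so I would decompose $\Phi(\s')-\Phi(\s)=\Delta_q+\Delta_{q'}$, where $\Delta_q=\max\{r_1-1,b_1\}-\max\{r_1,b_1\}$ and $\Delta_{q'}=\max\{r_2+1,b_2\}-\max\{r_2,b_2\}$.

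A short case distinction on which color holds the majority pins down both terms: $\Delta_q\in\{-1,0\}$, equal to $-1$ exactly when red is the \emph{strict} majority on $q$ (i.e.\ $r_1>b_1$), and $\Delta_{q'}\in\{0,1\}$, equal to $1$ exactly when $r_2\ge b_2$. Hence the sum can be negative only in the single configuration $r_1>b_1$ together with $r_2<b_2$, so the entire burden reduces to ruling out that such a move is improving.

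This is the crux. If $r_1>b_1$ then $\rho_R(q,\s)>\tfrac12$, while if $r_2<b_2$ then after $a$ joins we have $\rho_R(q',\s')=\frac{r_2+1}{r_2+b_2+1}\le\tfrac12$, since $r_2+1\le b_2$. So the post-move same-type fraction is at most $\tfrac12$ and the pre-move one exceeds $\tfrac12$; I would then push this through the cap $\min\{\cdot,\tau\}$ in the utility. If $\tau\ge\tfrac12$, the pre-move utility is $\ge\tfrac12$ and the post-move utility is $\le\tfrac12$, so the move is non-improving; if $\tau<\tfrac12$, the pre-move utility equals $\tau$ (as $\rho_R(q,\s)>\tfrac12>\tau$) while the post-move utility is at most $\tau$, again non-improving. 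Either way we contradict improvingness, so every improving move satisfies $\Phi(\s')-\Phi(\s)\ge0$. I expect this dovetailing of the majority-crossing observation with the $\tau$-cap, uniformly over all $\tau$, to be the only delicate point.

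For the second assertion I would note that $\Phi$ is integer-valued and bounded: each resource contributes at least $\lceil\#(q,\s)/2\rceil$ and at most $\#(q,\s)$, so summing gives $\tfrac{n}{2}\le\Phi(\s)\le n$ for every profile. Since improving moves never decrease $\Phi$ by the first part, and every strictly increasing move raises the integer $\Phi$ by at least $1$, at most $\tfrac{n}{2}$ of the moves in any improving sequence can strictly increase $\Phi$, which establishes the bound.
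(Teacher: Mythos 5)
Your proof is correct and takes essentially the same approach as the paper: the same decomposition of $\Phi$ into the two affected summands, with the crux being that an improving move cannot leave a strict-majority resource and land in minority after joining (the paper's Case~1, which you rule out by contraposition). Your version merely collapses the paper's four cases into identifying the single dangerous configuration, and in fact spells out the $\min\{\cdot,\tau\}$-cap argument more explicitly than the paper does.
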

\begin{proof}
The potential $\Phi$ can only have integer values in $[0, n]$, limiting the number of increasing moves.
Let, w.l.o.g, $a$ be a red agent making an improving move from resource $q$ to $q'$ changing the state from $\s$ to $\s'$.
We study the possible cases for the relation between $\#_R$ and $\#_B$ at $q$ and $q'$ and consider the terms of $\Phi$ for $q$ and $q'$ as all other terms do not change.

\noindent\textbf{Case 1:} ($\#_R(q, \s) > \#_B(q, \s)$):
We have $\Phi_q(\s)=\max\{\#_R(q, \s), \#_B(q, \s)\} = \max\{\#_R(q, \s'), \#_B(q, \s')\} + 1=\Phi_q(\s')+1$.
Since the move is improving, $\#_R(q', \s') > \#_B(q', \s')$ holds and therefore $\Phi_{q'}(\s)=$ $\max\{\#_R(q', \s), \#_B(q', \s)\} = \max\{\#_R(q', \s'), \#_B(q', \s')\}$ $- 1=\Phi_{q'}(\s')-1$.
Thus, the value of $\Phi$ remains unchanged.

\noindent\textbf{Case 2:} ($\#_R(q, \s) = \#_B(q, \s)$):
We have $\Phi_q(\s)=\max\{\#_R(q, \s), \#_B(q, \s)\} = \max\{\#_R(q, \s'), \#_B(q, \s')\}=$ $\Phi_q(\s')$.
Since the move is improving, $\#_R(q', \s') > \#_B(q', \s')$ and thus $\Phi_{q'}(\s)=\max\{\#_R(q', \s), \#_B(q', \s)\} =$  $ \max\{\#_R(q', \s'), \#_B(q', \s')\} - 1=\Phi_q'(\s') - 1$.
Thus, the value of $\Phi$ increases by 1.

\noindent\textbf{Case 3:} ($\#_R(q, \s) < \#_B(q, \s)$ and $\#_R(q', \s) < \#_B(q', \s)$):
Blue agents stay in the majority for this move in $q$ and $q'$, so $\Phi$ remains unchanged.

\noindent\textbf{Case 4:} ($\#_R(q, \s) < \#_B(q, \s)$ and $\#_R(q', \s) \geq \#_B(q', \s)$):
We have $\Phi_q(\s)=\max\{\#_R(q, \s), \#_B(q, \s)\} = \max\{\#_R(q, \s'), \#_B(q, \s')\}=\Phi_q(\s')$ and $\Phi_{q'}(\s)=\max\{\#_R(q', \s), \#_B(q', \s)\} = \max\{\#_R(q', \s'), \#_B(q', \s')\} $ $- 1=\Phi_q'(\s')-1$.
Thus, the value of $\Phi$ increases by 1.
\end{proof}
With \Cref{lem:majority-potential}, we know that in a sequence of improving moves, Cases 2 and 4 of the proof occur only finitely often.

\begin{theorem}
    For $\tau \leq \frac12$, the \game{} has the IA-FIP.\footnote{We conjecture that this also holds for arbitrary $\tau$.}
\end{theorem}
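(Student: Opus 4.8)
The plan is to upgrade the majority potential $\Phi$ from \Cref{lem:majority-potential} into a lexicographic potential by pairing it with a second function that strictly increases exactly on the moves that leave $\Phi$ unchanged. The crucial first observation, and the place where $\tau \le \tfrac12$ enters, is that any agent performing an impact-aware improving move must be in a \emph{strict minority} on her current resource: if a red mover had $\#_R(q,\s) \ge \#_B(q,\s)$, then $\rho_R(q,\s) \ge \tfrac12 \ge \tau$, so her utility already equals the cap $\tau$ and no improving move exists (symmetrically for blue). Hence Cases~1 and~2 in the proof of \Cref{lem:majority-potential} never occur, and every improving move is a Case~3 move (the mover stays in the minority on the target) or a Case~4 move (the mover reaches at least parity on the target, strictly increasing $\Phi$). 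Since $\Phi$ is integer-valued in $[0,n]$, at most $n$ Case~4 moves can occur, so it remains only to bound the Case~3 moves that happen while $\Phi$ is constant.

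For the secondary potential I would use a Rosenthal-style sum over the minority agents,
\[ \Psi(\s) = \sum_{q \in Q}\ \sum_{j=1}^{\min\{\#_R(q,\s),\#_B(q,\s)\}} \frac{j}{\,j+\max\{\#_R(q,\s),\#_B(q,\s)\}\,}. \]
The intuition is that along a Case~3 move the majority count $\max\{\#_R(q,\s),\#_B(q,\s)\}$ at both the source and the target is frozen (this is exactly what the Case~3 analysis of \Cref{lem:majority-potential} shows), so the minority agents are effectively playing a singleton congestion game in which a color-$c$ agent on resource $q$ receives payoff $\frac{\ell}{\ell+\beta_q}$, where $\ell$ is the number of same-color agents on $q$ and $\beta_q$ the frozen majority count. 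I would verify that a Case~3 move of, say, a red agent from $q$ to $q'$ deletes the top term $\frac{\#_R(q,\s)}{\#_R(q,\s)+\#_B(q,\s)} = \rho_R(q,\s)$ at $q$ and adds the new top term $\frac{\#_R(q',\s)+1}{\#_R(q',\s)+1+\#_B(q',\s)} = \rho_R(q',\s')$ at $q'$, so that $\Psi(\s') - \Psi(\s) = \rho_R(q',\s') - \rho_R(q,\s)$, exactly the uncapped change in the mover's fraction.

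It then remains to argue this difference is strictly positive. Because the mover is in a strict minority, her utility before the move is $\rho_R(q,\s) < \tau$, and the improving condition $\min\{\rho_R(q',\s'),\tau\} > \rho_R(q,\s)$ forces $\rho_R(q',\s') > \rho_R(q,\s)$ even when the post-move fraction is capped; hence $\Psi$ strictly increases on every Case~3 move. Combining, the pair $(\Phi,\Psi)$ increases lexicographically under every impact-aware improving move, and since there are only finitely many strategy profiles, no such sequence can revisit a state and therefore must terminate, giving the IA-FIP. I expect the main obstacle to be precisely the treatment of the plateau (Case~3) moves: the minority subgame is a \emph{coordination} game with positive externalities (more same-color agents help), so the usual intuition that congestion discourages crowding does not apply, and one has to check that Rosenthal's potential still works and that the $\min\{\cdot,\tau\}$ cap does not destroy its monotonicity. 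Care is also needed with ties and empty target resources, both of which should be seen to fall under Case~4 and hence be absorbed by $\Phi$ rather than $\Psi$.
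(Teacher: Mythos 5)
Your proof is correct, and while it shares the paper's overall skeleton---a lexicographic potential built on the majority potential $\Phi$ of \Cref{lem:majority-potential}---your secondary potential is genuinely different. The paper pairs $\Phi$ with $Z(\s)$, the descendingly sorted vector of (capped) utilities, and splits on the target's pre-move fraction: for $\rho_R(q',\s) < \tfrac12$ it argues that $Z$ lexicographically increases (the only losers are the red agents remaining at $q$, who sat exactly at the mover's old utility, while the blue agents at $q'$ keep a fraction of at least $\tfrac12 \geq \tau$ and hence stay at the cap); for $\rho_R(q',\s) = \tfrac12$ it invokes the increase of $\Phi$; and for $\rho_R(q',\s) > \tfrac12$ it again argues via $Z$. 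Your route replaces $Z$ with the Rosenthal-style minority sum $\Psi$, and it buys two things. First, your opening observation---that $\tau \leq \tfrac12$ forces every mover to be in strict minority at her source, so only Cases~3 and~4 of \Cref{lem:majority-potential} can arise---yields a case split exactly aligned with the lemma, with $\Phi$ absorbing every move whose target has $\#_R(q',\s) \geq \#_B(q',\s)$. Second, on plateau moves $\Psi$ gives a purely local, quantitative certificate: its change equals $\rho_R(q',\s') - \rho_R(q,\s) > 0$, so you never need to track which bystanders gain or lose utility. That is precisely where the paper's argument is delicate: its third case asserts that the first change in the sorted vector is an improvement, but this can fail when the blue agents at $q'$ sit exactly at the cap $\tau < \tfrac12$ and drop below it after the move (the number of agents with utility $\tau$ can then strictly decrease); such moves are in fact harmless only because they increase $\Phi$---which is exactly the observation your two-case split makes explicit, so your write-up is arguably the more robust of the two. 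One cosmetic imprecision on your side: the lemma's Cases~3 and~4 are delineated by the pre-move counts at the target, so ``stays in the minority'' should be read as ``at most parity after the move''; your computation of $\Psi$, whose new top term is $\frac{\#_R(q',\s)+1}{\#_R(q',\s)+1+\#_B(q',\s)}$, already handles the resulting tie correctly.
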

\begin{proof}
Let $Z(\s)$ be the descendingly sorted vector of utilities in $\s$.
We show that $Z$ in combination with $\Phi$ is a lexicographic potential function for the $\gameshort$.

To prove this, let, w.l.o.g., a red agent $a$ make an improving move from resource $q$ to $q'$ changing the state from $\s$ to $\s'$.
We consider three cases for $\rho_R(q', \s)$ before the move:

\noindent\textbf{Case 1 ($\rho_R(q', \s) < \frac12$):}
The only agents that lose utility are the red agents that keep using $q$.
(The utility of the blue agents using $q'$ does not fall below $\frac12$ and therefore does not change.)
Since agent $a$ ends up with a utility greater than what the losing agents had before, the earliest change in the vector is an increase.

\noindent\textbf{Case 2 ($\rho_R(q', \s) = \frac12$):}
In this case, the potential given in \Cref{lem:majority-potential} increases (Case 2), hence this case may only occur a finite number of times in a sequence of improving moves.

\noindent\textbf{Case 3 ($\rho_R(q', \s) > \frac12$):}
Only blue agents at $q'$ and red agents at $q$ lose utility.
All of them have a utility strictly smaller than $\frac12$ in $\s'$ before the move.
Since $a$ improves from a utility smaller than $\frac12$ to equal to $\frac12$, the first change in the vector is an improvement in the new spot of $a$.

The number of possible values of an entry in $Z$ is limited, and thus also the number of possible values for the vector.
\end{proof}

\section{Equilibrium Approximation}

Next, we show that we can compute an approximate IAE by using an IBE with specific properties as a proxy.
\begin{theorem}
    A $2-$approximate impact-aware equilibrium can be computed in runtime $O(n^5m)$ for any $\tau \in [0,1]$.
\end{theorem}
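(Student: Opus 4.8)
The plan is to reach the desired profile by improving-response dynamics, using the social welfare evaluated as if $\tau=1$ as an ordinal potential. Write $\Phi(\s)=\sum_{q\in Q}\frac{\#_R(q,\s)^2+\#_B(q,\s)^2}{\#(q,\s)}$ for this uncapped welfare. I would allow two kinds of moves: impact-blind improving moves, and impact-aware improving moves that at least double the mover's (capped) utility. \Cref{lem:social-welfare-increase} already shows that impact-blind moves strictly increase $\Phi$: the argument is stated for $\tau=1$, but since an impact-blind move for $\tau<1$ is also one for $\tau=1$ and $\Phi$ is exactly the $\tau=1$ welfare, it applies for every $\tau$. The one genuinely new ingredient is to show that a doubling impact-aware move also strictly increases $\Phi$; given that, a profile admitting neither kind of move is an IBE (no impact-blind move) that additionally admits no impact-aware move at least doubling any agent's utility, i.e.\ a $2$-approximate IAE. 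This is the ``specific IBE'' used as a proxy.

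First I would reduce the capped doubling condition to an uncapped one. If the mover's current utility equals $\tau$ she is already maximal and has no improving move, so assume it is $u<\tau$; then her current same-type fraction is exactly $u$, and a move whose resulting capped utility is at least $2u$ must have new same-type fraction at least $2u$ as well (regardless of whether the new value is capped). Hence it suffices to treat, w.l.o.g., a red agent moving from $q$ to $q'$ whose new red fraction $u'=\rho_R(q',\s')$ satisfies $u'\ge 2u$, and to show $\Phi(\s')>\Phi(\s)$.

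For this I would reuse the exact welfare difference computed in the proof of \Cref{lem:social-welfare-increase}. Writing $s_1=\#(q,\s)$ and $s_2=\#(q',\s)$ and substituting $b_1=s_1(1-u)$ and $b_2=(s_2+1)(1-u')$ yields $\Phi(\s')-\Phi(\s)=2\big(\tfrac{s_1}{s_1-1}(1-u)^2-\tfrac{s_2+1}{s_2}(1-u')^2\big)$. Bounding $\tfrac{s_1}{s_1-1}\ge 1$ and using $(1-u)^2\ge(1-u'/2)^2$ (valid since $u\le u'/2$), it is enough to verify $(1-u'/2)^2>\tfrac{b_2^2}{s_2(s_2+1)}$. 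Since $1-u'/2=\tfrac{s_2+1+b_2}{2(s_2+1)}$, clearing denominators turns this into $s_2\,(s_2+1+b_2)^2>4(s_2+1)\,b_2^2$. Viewed as a function of $b_2$ this is a downward parabola, so on the admissible range $0\le b_2\le s_2$ its minimum is attained at an endpoint; the value at $b_2=0$ is $s_2(s_2+1)^2$ and at $b_2=s_2$ it is exactly $s_2$, both positive. The empty-target case ($s_2=0$) is checked directly and is immediate. Finally, $\Phi\in[0,n]$ and every allowed move changes $\Phi$ by $2$ times a positive rational with denominator at most $n^4$, hence by at least $2/n^4$, exactly as in \Cref{polytimesteps}; so at most $O(n^5)$ moves occur, each found in $O(m)$, for total runtime $O(n^5m)$.

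The main obstacle I anticipate is precisely the welfare-increase claim for doubling moves, and within it the small-target regime: the naive continuous estimate $(1-u)^2>2(1-u')^2$ fails for small $u$, and what rescues the bound is that $r_2,b_2$ are integers, which forces the realizable values of $u'$ on a small target to be large enough (encoded above by the endpoint analysis of the parabola). The second point requiring care is threading the cap at general $\tau$ through the argument, but as noted this reduces cleanly to the uncapped statement, so it should not present a conceptual difficulty.
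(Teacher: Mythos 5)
Your proposal is correct, and its overall skeleton matches the paper's: the uncapped $\tau=1$ welfare $\Phi$ as an ordinal potential, the reduction of the capped doubling condition at general $\tau$ to the uncapped fraction condition, the step-size bound of order $1/n^4$ giving $O(n^5)$ moves at $O(m)$ per move (as in \Cref{polytimesteps}), and a specific IBE as the proxy. Where you genuinely diverge is in the key welfare-increase lemma and, correspondingly, the dynamics. The paper proves only the \emph{conditional} statement: starting from an IBE, an impact-aware move improving by a factor of more than $2$ increases $\Phi$. Its proof exploits the IBE property -- since the mover's current fraction is at least $\rho_R(q',\s)=\frac{r_2}{r_2+b_2}$, her post-move fraction $\frac{r_2+1}{r_2+b_2+1}$ can exceed twice her current utility only if $r_2=0$ -- so the computation collapses to the single case of a monochromatic target, which is handled by a short direct estimate. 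Consequently, the paper's algorithm is hill-climbing on $\Phi$ itself (``execute all strategy changes that improve social welfare''), whose local maxima are IBEs by \Cref{lem:social-welfare-increase} and then, by contradiction, admit no such factor-$2$ move. You instead prove the \emph{unconditional} statement that any at-least-doubling impact-aware move, from any profile, increases $\Phi$; this forces you to handle arbitrary $r_2\geq 0$, which your concave-parabola endpoint analysis does correctly (note that your binding endpoint $b_2=s_2$ is exactly the paper's $r_2=0$ case, so your lemma strictly subsumes theirs). What your route buys is a dynamics consisting solely of genuine improving moves of the game (impact-blind or doubling impact-aware, interleaved arbitrarily), rather than centrally imposed welfare-improving changes that need not benefit the mover; what the paper's route buys is a much shorter computation, at the price of coupling the inequality to the IBE property. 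Your small caveat about ``integrality'' is really just non-negativity of $r_2$ (the mover herself contributes the $+1$), but the endpoint argument you give is the actual proof and it is sound.
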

\begin{proof}
For the proof, we use $\tau=1$, but a $2-$approximate IAE for $\tau=1$ is also a $2-$approximate IAE for all other values of $\tau$ as the utility gain of a move monotonically decreases with decreasing $\tau$.
Let, w.l.o.g., a red agent $a$ make an impact-aware improving move from resource $q$ to $q'$, changing the strategy profile from an IBE $\s$ to $\s'$.
First, we show that if $a$ improves by a factor of more than 2, then social welfare increases.
Let $r_1 = \#_R(q, \s)$, $b_1 = \#_B(q, \s)$, $r_2 = \#_R(q', \s)$ and $b_2 = \#_B(q', \s)$.
As $\s$ is an IBE and therefore $a$'s current utility is in the interval of $\frac{r_2}{r_2+b_2}$ and $\frac{r_2+1}{r_2+b_2+1}$, we have $r_2=0$ to allow for an improvement factor of more than 2.
The move changes the sum of utilities of agents using $q'$ by the value
\[
    \frac{1}{b_2+1}-b_2\left(\frac{1}{b_2+1}\right) = \frac{2}{b_2+1}-1
\]
For $q$ this change is
\begin{linenomath}
    \begin{align*}
    &(b_1-r_1+1)\left(\frac{r_1}{r_1+b_1}-\frac{r_1-1}{r_1+b_1-1}\right)-\frac{r_1}{r_1+b_1}\\
    =&(b_1-r_1+1)\cdot\frac{b_1}{(r_1+b_1)(r_1+b_1-1)}-\frac{r_1}{r_1+b_1}\\
    =&\frac{(b_1+r_1+1)b_1 -2b_1r_1}{(r_1+b_1)(r_1+b_1-1)}-\frac{r_1}{r_1+b_1}\\
    =&1-\frac{r_1}{r_1+b_1}-\frac{2b_1r_1}{(r_1+b_1)(r_1+b_1-1)}-\frac{r_1}{r_1+b_1}\\
    \geq&1-4\cdot\frac{r_1}{r_1+b_1}.
    \end{align*}
\end{linenomath}
In the last step, we use $\frac{b_1}{r_1+b_1-1} \leq 1$. Therefore the total change in social welfare is at least
$
\frac{2}{b_2+1}-4\cdot\frac{r_1}{r_1+b_1}
$,
which is greater than 0, by the assumption that 
    $2\left(\frac{r_1}{r_1+b_1}\right) < \frac{1}{b_2+1}$.

For finding an IBE with no possible strategy change that increases social welfare, we use the algorithm in \Cref{polytimesteps}.
However, we execute all strategy changes that improve social welfare and not just impact-blind improving moves.
\end{proof}

\section{Equilibrium Quality}\label{sec:quality}
In this section, we compute the exact Price of Anarchy for all values of $\tau$ which holds for both impact-aware and impact-blind agents.
Additionally, we give an upper bound on the impact-blind Price of Stability and show that for $\tau \leq \frac12$ the social optimum is not necessarily an IBE.

\subsection{Price of Anarchy}
For giving an upper bound on the Price of Anarchy, we first prove lower bounds for the sum of utilities of the agents using a resource. After that, we use the fact that a social welfare optimum assigns at most utility $\tau$ to every agent for deriving our upper bound. Finally, we give a class of instances in which we match this upper bound on the PoA asymptotically.
This gives us an exact PoA for both impact-blind and impact-aware agents.
\begin{lemma}
\label{boundlemma1}
Consider a resource $q$ in a arbitrary state $\s$, with, w.l.o.g., $\#_R(q, \s) \geq \#_B(q, \s)$.
If $\rho_R(q, \s) \geq \tau$, then the sum of utilities of the agents using $q$ is at least $\#(q, \s)\left(\tau-\frac{\tau^2}{4}\right)$.
\end{lemma}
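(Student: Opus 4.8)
The plan is to compute the sum of utilities on $q$ explicitly in terms of $r := \#_R(q,\s)$, $b := \#_B(q,\s)$, and $N := \#(q,\s) = r+b$, and then reduce the target inequality to a perfect square. Since $\rho_R(q,\s) = r/N \geq \tau$ by hypothesis, every red agent on $q$ has utility exactly $\min\{\rho_R(q,\s),\tau\} = \tau$, contributing $r\tau$ in total. The blue contribution is $b\cdot\min\{\rho_B(q,\s),\tau\}$, so the natural move is to split on whether $\rho_B(q,\s) = b/N$ is at least $\tau$ or strictly below it.

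In the first case, $\rho_B(q,\s) \geq \tau$, so every blue agent also has utility $\tau$ and the sum of utilities is exactly $N\tau \geq N\bigl(\tau - \tfrac{\tau^2}{4}\bigr)$, which holds trivially since $\tfrac{\tau^2}{4}\geq 0$. In the second case, $\rho_B(q,\s) < \tau$, so each blue agent has utility $b/N$ and the total sum of utilities is $S = r\tau + \tfrac{b^2}{N}$. The only nontrivial work is to confirm $S \geq N\bigl(\tau - \tfrac{\tau^2}{4}\bigr)$ in this second case.

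For that, I would substitute $r = N - b$ and multiply $S - N\bigl(\tau - \tfrac{\tau^2}{4}\bigr)$ through by $N > 0$; after cancellation the claim becomes $b^2 - b\tau N + \tfrac{\tau^2 N^2}{4} \geq 0$, which is precisely the perfect square $\bigl(b - \tfrac{\tau N}{2}\bigr)^2 \geq 0$ and therefore always holds. Recognizing this factorization is the crux of the argument; everything else is bookkeeping, so I do not expect a genuine obstacle, only the need to keep the two cases and the capping at $\tau$ straight. I would also remark that the assumption $\#_R(q,\s)\geq \#_B(q,\s)$ serves only to fix red as the color whose fraction meets the threshold (the argument uses nothing beyond $\rho_R(q,\s)\geq\tau$), and that the bound is tight exactly when $b = \tfrac{\tau N}{2}$, i.e.\ when the minority fraction equals $\tfrac{\tau}{2}$.
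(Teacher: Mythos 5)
Your proof is correct and takes essentially the same approach as the paper: both express the sum of utilities on $q$ as a quadratic in the blue fraction (each red agent contributes $\tau$, each blue agent contributes her fraction) and show this quadratic is minimized at blue fraction $\tfrac{\tau}{2}$ with value $\#(q,\s)\left(\tau - \tfrac{\tau^2}{4}\right)$. The only differences are cosmetic: you complete the square where the paper sets a derivative to zero, and you explicitly dispose of the trivial case $\rho_B(q,\s) \geq \tau$ (both colors capped at $\tau$), which the paper's parametrization $\rho_B(q,\s) = \tau - \epsilon$ with $\epsilon \geq 0$ silently leaves out.
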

\begin{proof}
Let $\rho_B(q, \s) = \tau - \epsilon$, with $\epsilon \geq 0$.
The sum of utilities of agents using $q$ is
$S(\epsilon) = \#(q, \s)((1 - (\tau - \epsilon))\tau + (\tau - \epsilon)^2) = \#(q, \s)(\tau - \tau\epsilon + \epsilon^2)$.
To find the minimum of this expression, we set $\frac{d}{d\epsilon}S(\epsilon) = 0$ and get $0 = \#(q, \s)(-\tau + 2\epsilon) \to \epsilon = \frac{\tau}{2}$.
Thus, the minimum for the sum of utilities $S(\epsilon)$ is $S(\frac{\tau}{2}) = \#(q, \s)\left(\tau - \frac{\tau^2}{4}\right)$.
\end{proof}

\begin{lemma}
\label{boundlemma2}
Consider a resource $q$ in a arbitrary state $\s$ with, w.l.o.g., $\#_R(q, \s) \geq \#_B(q, \s)$.
If $\rho_R(q, \s) \leq \tau$, then the sum of utilities of the agents using $q$ is at least $\frac{\#(q, \s)}{2}$.
\end{lemma}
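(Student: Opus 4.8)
The plan is to observe that under the lemma's hypotheses the tolerance cap $\tau$ never binds, reduce the sum of utilities to a clean algebraic expression, and finish with a one-line power-mean inequality.

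First I would note that since $q$ has a (weak) red majority, $\rho_B(q,\s) \leq \rho_R(q,\s)$, and the hypothesis $\rho_R(q,\s) \leq \tau$ then yields $\rho_B(q,\s) \leq \rho_R(q,\s) \leq \tau$. Hence in the utility definition the minimum with $\tau$ is always attained by the fraction rather than by $\tau$: every red agent on $q$ has utility exactly $\rho_R(q,\s)$ and every blue agent has utility exactly $\rho_B(q,\s)$. This is precisely the complementary regime to \Cref{boundlemma1}, where $\rho_R(q,\s)\geq\tau$ instead forces the red utilities to be capped at $\tau$.

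Writing $r=\#_R(q,\s)$, $b=\#_B(q,\s)$, and $N=\#(q,\s)=r+b$, the sum of utilities on $q$ is then $r\cdot\frac{r}{N}+b\cdot\frac{b}{N}=\frac{r^2+b^2}{N}$. It remains to show $\frac{r^2+b^2}{N}\geq \frac{N}{2}$, i.e. $2(r^2+b^2)\geq (r+b)^2$, which rearranges to $(r-b)^2\geq 0$ and is therefore immediate; equivalently, the quadratic mean of $r$ and $b$ dominates their arithmetic mean.

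There is essentially no obstacle here. The only point requiring care is confirming that the cap $\tau$ is inactive for \emph{both} colors, which is exactly what the majority assumption together with $\rho_R(q,\s)\leq\tau$ guarantees; had the cap bound for the majority color, the utility sum would no longer be the convex quantity $\frac{r^2+b^2}{N}$ and a different estimate would be needed. The degenerate case $N=0$ is vacuous, as both sides then equal $0$.
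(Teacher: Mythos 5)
Your proof is correct, but it takes a different route from the paper's. The paper argues combinatorially: it pairs each blue agent on $q$ with a distinct red agent (possible since $\#_R(q,\s) \geq \#_B(q,\s)$), notes that each such pair has combined utility $\rho_R(q,\s) + \rho_B(q,\s) = 1$, and that each unpaired red agent has utility $\rho_R(q,\s) \geq \tfrac12$, so the average utility on $q$ is at least $\tfrac12$. You instead compute the sum of utilities exactly as $\frac{r^2+b^2}{r+b}$ and close with the quadratic-mean inequality $2(r^2+b^2) \geq (r+b)^2$. Both arguments silently or explicitly rest on the same key observation, which you spell out and the paper does not: under the hypotheses, neither color's fraction exceeds $\tau$, so the cap in the utility definition is inactive (without this, a red--blue pair's combined utility could drop below $1$, and the unpaired red agents' utility below $\tfrac12$). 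Your version is slightly longer but makes this dependence explicit, yields the exact value of the welfare on $q$ --- the same expression $\frac{r^2+b^2}{r+b}$ that the paper works with in \Cref{lem:social-welfare-increase} --- and shows the bound is tight exactly when $\#_R(q,\s) = \#_B(q,\s)$; the paper's pairing argument buys brevity.
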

\begin{proof}
Form pairs between blue and red agents using $q$.
Each pair has a combined utility of 1.
All unpaired red agents have a utility of at least $\frac12$.
Thus, the average utility of an agent using $q$ is at least $\frac12$.
\end{proof}
\begin{theorem}\label{poaupper}
The IAE-PoA and IBE-PoA for $\tau \leq 2-\sqrt{2}$ are at most $\frac{4}{4-\tau}$ and for $\tau \geq 2-\sqrt{2}$ they are at most $2\tau$.
\end{theorem}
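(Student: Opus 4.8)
The plan is to establish a single per-agent lower bound on the social welfare of \emph{any} strategy profile and then divide the trivial upper bound $W(\textnormal{OPT}) \le \tau n$ by it. Since the utility of every agent is capped at $\tau$, this upper bound is immediate. For the lower bound I would not need the equilibrium conditions at all: \Cref{boundlemma1} and \Cref{boundlemma2} already hold for arbitrary states, so the resulting welfare guarantee holds for every profile, and in particular for the welfare-minimizing IAE and IBE simultaneously. (Since every IAE is an IBE, the worst IBE has welfare no larger than the worst IAE, so a bound valid for all profiles covers both PoA notions at once.)

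First I would fix an arbitrary profile $\s$ and an arbitrary non-empty resource $q$, letting the majority color play the role of red, which is w.l.o.g.\ by the red/blue symmetry of the utility definition (this also covers single-color resources, where the majority fraction is $1 \ge \tau$). Exactly one of the two lemmas then applies to $q$: if the majority fraction is at least $\tau$, \Cref{boundlemma1} gives a utility sum of at least $\#(q,\s)\left(\tau - \tfrac{\tau^2}{4}\right)$; otherwise \Cref{boundlemma2} gives at least $\tfrac{\#(q,\s)}{2}$. Hence in all cases the agents on $q$ contribute at least $\#(q,\s)\cdot \min\{\tau - \tfrac{\tau^2}{4},\, \tfrac12\}$. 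Summing over all resources and using $\sum_q \#(q,\s) = n$ yields
\begin{linenomath}
\[
W(\s) \;\ge\; n\cdot\min\!\left\{\tau - \tfrac{\tau^2}{4},\ \tfrac12\right\}.
\]
\end{linenomath}

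Next I would locate the threshold where the two candidate bounds coincide: solving $\tau - \tfrac{\tau^2}{4} = \tfrac12$ gives $\tau^2 - 4\tau + 2 = 0$, whose root in $[0,1]$ is $\tau = 2 - \sqrt{2}$. For $\tau \le 2-\sqrt{2}$ the minimum equals $\tau - \tfrac{\tau^2}{4}$, so combining with $W(\textnormal{OPT}) \le \tau n$ gives a ratio of at most $\frac{\tau}{\tau - \tau^2/4} = \frac{4}{4-\tau}$; for $\tau \ge 2-\sqrt{2}$ the minimum equals $\tfrac12$, giving a ratio of at most $\frac{\tau}{1/2} = 2\tau$. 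Because the welfare bound above holds for the welfare-minimizing equilibrium in particular, these ratios bound both the IAE-PoA and the IBE-PoA.

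There is essentially no serious obstacle once the two lemmas are in hand: the content is the observation that every resource falls under exactly one of the lemma hypotheses, and that taking the pointwise minimum of the two bounds yields a uniform per-agent guarantee. The only points requiring care are the w.l.o.g.\ reduction to the majority color via the utility symmetry (including empty resources, which contribute nothing) and the routine ratio arithmetic pinning down $2-\sqrt{2}$ as the crossover point. Matching these upper bounds from below is a separate task, handled by the lower-bound instances referenced alongside this theorem.
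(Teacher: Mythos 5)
Your proposal is correct and follows essentially the same route as the paper's proof: both apply \Cref{boundlemma1,boundlemma2} resource-by-resource to an \emph{arbitrary} state (so the equilibrium hypothesis is never needed), sum to get $W(\s) \ge n\cdot\min\{\tau-\tfrac{\tau^2}{4},\tfrac12\}$, compare against $W(\textnormal{OPT}) \le \tau n$, and identify $\tau = 2-\sqrt{2}$ as the crossover where the two per-agent bounds coincide. Your explicit remarks on the w.l.o.g.\ majority-color reduction, empty resources, and the fact that one universal welfare bound covers both IAE-PoA and IBE-PoA are just careful spellings-out of what the paper leaves implicit.
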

\begin{proof}
Consider a resource $q$ in a arbitrary state $\s$ with, w.l.o.g., $\#_R(q, \s) \geq \#_B(q, \s)$ and consider the sum of utilities $S$ of the agents using $q$ in two different cases.

\noindent\textbf{Case 1 ($\rho_R(q, \s) \geq \tau$):}
    By \Cref{boundlemma1}, we have $S \geq \#(q, \s)\left(\tau-\frac{\tau^2}{4}\right)$.
        
\noindent\textbf{Case 2 ($\rho_R(q, \s) \leq \tau$):}
    By \Cref{boundlemma2}, we have $S \geq \frac{\#(q, \s)}{2}$.
    
Now, we check which case dominates for which $\tau$:
\[
\frac{1}{2}\leq \tau-\frac{\tau^2}{4} \iff 2-\sqrt{2} \leq \tau \leq 2+\sqrt{2}\text.
\]
Thus, for $\tau \geq 2-\sqrt{2}$, the state $\s$ with the lowest social welfare has at least $W(\s) \geq \frac{n}{2}$.
Similarly, for $\tau \leq 2-\sqrt{2}$, the lowest social welfare is at least $W(\s) \geq n\left(\tau-\frac{\tau^2}{4}\right)$.
The highest social welfare any state can have is $\tau n$, so we get bounds of $\frac{4}{4-\tau}$ and $2\tau$ for the PoA, respectively.
\end{proof}

We match these upper bounds asymptotically, by creating instances in which only a constant number of agents do not use resources with the worst-case distributions given in \Cref{boundlemma1,boundlemma2}.

\begin{theorem}
\label{poalower}
The IAE-PoA and IBE-PoA for $\tau \leq 2-\sqrt{2}$ are $\frac{4}{4-\tau}$ and for $\tau \geq 2-\sqrt{2}$ they are $2\tau$.
\end{theorem}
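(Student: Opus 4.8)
The plan is to complement the upper bound of \Cref{poaupper} with matching instances. For each of the two regimes I will construct a family of accessibility graphs together with a strategy profile that (i) is an impact-aware equilibrium, (ii) places (almost) every agent on a resource whose color distribution is exactly the welfare-minimizer identified in \Cref{boundlemma1}, resp.\ \Cref{boundlemma2}, and (iii) admits an optimum of social welfare $\tau n$. Since every IAE is also an IBE and $\mathrm{minIBE}(G)\le\mathrm{minIAE}(G)$, exhibiting a bad \emph{impact-aware} equilibrium simultaneously lower-bounds both the IAE-PoA and the IBE-PoA; together with \Cref{poaupper} this pins both quantities to the claimed values.

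For $\tau\le 2-\sqrt2$ (so that $1-\tfrac\tau2\ge\tau$), the minimizer is the red-majority resource with $\rho_R=1-\tfrac\tau2$. I take $k$ resources $X_1,\dots,X_k$ and one extra resource $G$. Each $X_i$ carries $(1-\tfrac\tau2)s$ red agents, which may access $X_i$ and $G$, together with $\tfrac\tau2\,s$ blue agents of degree $1$ (accessible only to $X_i$). In the profile where every agent stays on its $X_i$ and $G$ is empty, each red has $\rho_R=1-\tfrac\tau2\ge\tau$ and hence the capped utility $\tau$, while each blue has utility $\tfrac\tau2$. This is an IAE: the blue agents are pinned, and the red agents already attain the cap $\tau$, so no move (in particular the move to the empty $G$) is strictly improving. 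Its welfare is $n(\tau-\tfrac{\tau^2}4)$ by \Cref{boundlemma1}, whereas sending all reds to $G$ makes $G$ monochromatic red and every $X_i$ monochromatic blue, giving welfare $\tau n$; the ratio is exactly $\tfrac{\tau}{\tau-\tau^2/4}=\tfrac{4}{4-\tau}$.

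For $\tau\ge 2-\sqrt2$ the minimizer is the balanced resource, which is the delicate case: at a perfectly balanced resource both colors sit strictly below the cap, so any mobile agent could strictly improve by consolidating with its own color. I therefore perturb the balance by a small $\delta>0$, splitting the resources into $k$ ``red'' resources with $\rho_R=\tfrac12+\delta$ and $k$ ``blue'' resources with $\rho_R=\tfrac12-\delta$, each of size $s$. On a red resource the $(\tfrac12+\delta)s$ majority reds may also access all blue resources, while the $(\tfrac12-\delta)s$ minority blues have degree $1$; the blue resources are symmetric. In the home profile every majority agent's only alternatives are resources dominated by the opposite color, so for $s\gg 1/\delta$ the gain from the single extra unit cannot overcome the $2\delta$ gap and no move is improving, and the minorities are pinned. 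The welfare is $n(\tfrac12+2\delta^2)$, while the coordinated swap that sends all majority reds to blue resources and all majority blues to red resources segregates every resource and yields welfare $\tau n$. Letting $\delta\to0$ (with $s$ growing fast enough) drives the ratio to $\tfrac{\tau}{1/2}=2\tau$.

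The main obstacle is precisely condition (i): a worst-case distribution is \emph{not} stable under free movement, because adding oneself to a resource of equal own-color fraction always raises that fraction slightly. The construction circumvents this by (a) pinning the minority agents through degree-$1$ access, so they cannot reach a better resource, and (b) letting the remaining mobility point only to resources where the agent would be in the minority (or, in the first regime, by exploiting the cap at $\tau$), so that the welfare-improving reassignment is realizable as a coordinated move but never as a unilateral deviation. Verifying the non-improvement inequalities for the mobile agents — in particular that the $\tfrac{1}{s+1}$-type correction stays below the $2\delta$ margin in the second regime — is the one genuinely computational point, and it is what forces the asymptotic (rather than exact) matching there.
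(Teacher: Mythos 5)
Your proposal is correct, and it reaches the tight bounds by a genuinely different construction than the paper. The paper uses a single three-resource instance in both regimes: the worst-case mixture sits on $q_1$, and stability is enforced by two small monochromatic gadget groups $R_z, B_z$ of size $\lceil \tfrac2\tau\rceil$ parked on $q_2$ and $q_3$, so that every mobile agent's only alternative is a resource dominated by the opposite color (yielding deviation utility $\tfrac{1}{\lceil 2/\tau\rceil+1} < \tfrac\tau2$); nothing is pinned by degree-$1$ access, and the ratio is matched asymptotically because the $O(1/\tau)$ gadget agents receive utility $\tau$. You instead pin the minorities outright and stabilize the majorities by two different mechanisms: in the regime $\tau \le 2-\sqrt2$ you exploit the cap at $\tau$ together with an empty resource $G$ that doubles as the optimum destination, which buys you an \emph{exact} ratio of $\tfrac{4}{4-\tau}$ (up to integrality) rather than an asymptotic one; in the regime $\tau \ge 2-\sqrt2$ you pay for the absence of gadgets with a $\delta$-perturbation of balance and a double limit $s \gg 1/\delta$, $\delta \to 0$, whereas the paper's gadgets let it keep $q_1$ exactly balanced. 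One further point in your favor: your assignment of worst-case distributions to regimes (the $\rho_R = 1-\tfrac\tau2$ mixture of \Cref{boundlemma1} for $\tau \le 2-\sqrt2$, the balanced mixture of \Cref{boundlemma2} for $\tau \ge 2-\sqrt2$) is the one consistent with \Cref{poaupper}; the paper's proof text states the two cases the other way around, which must be a typo --- with the paper's literal labeling, the balanced instance for $\tau < \tfrac12$ would give every agent the maximum utility $\tau$ and hence no gap, exactly the obstruction you identify and circumvent.
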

\begin{proof}
For $\tau \geq 2-\sqrt{2}$, we create the instance $G_\alpha$ with $Q=\{q_1,q_2,q_3\}$ with a parameter $\alpha$ such that $\alpha \in \mathbb{N}$.
We have a set $R_x$ of $\frac{\alpha(2-\tau)}{2}$ red agents with edges to $q_1$ and $q_3$ and a set $B_x$ of $\frac{\alpha \tau}{2}$ blue agents with edges to $q_1$ and $q_2$.
(Since we let $\alpha\to\infty$ later, we can use the nearest integer values for $R_x$ and $B_x$ with an error that goes to $0$.)
Furthermore, we have a set of red agents $R_z$ and blue agents $B_z$, both with $\lceil\frac2\tau\rceil$ agents and edges to $q_2$ and $q_3$.
Assigning all red agents to $q_3$ and assigning all blue agents to $q_2$ gives all agents the maximum utility of $\tau$, achieving the same best-case optimum we used in \Cref{poaupper}. See \Cref{fig:poasub2}.

\begin{figure}[t]
\centering
\begin{subfigure}{0.45\linewidth}
  \centering
  \includegraphics[width=1\linewidth]{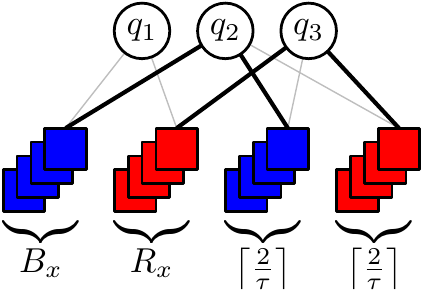}
  \caption{Social optimum.}
  \label{fig:poasub2}
\end{subfigure}
\hfill
\begin{subfigure}{0.45\linewidth}
  \centering
  \includegraphics[width=1\linewidth]{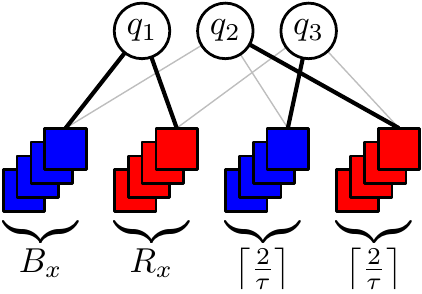}
    \captionsetup{justification=centering}
  \caption{Socially bad IAE.}
  \label{fig:poasub1}
\end{subfigure}
\caption{Example instance from the proof of \Cref{poalower} showing that the PoA bound in \Cref{poaupper} is tight. 
}
\label{fig:poasharp}
\end{figure}

Assigning $R_x$ and $B_x$ to $q_1$, while assigning $R_z$ and $B_z$ to $q_2$ and $q_3$ respectively yields an IAE (see \Cref{fig:poasub1}), as the agents using $q_1$ can only switch to a resource used by only the respective other color.
In particular, a blue agent $a \in B_x$ with utility $\frac\tau2$ cannot improve by moving to $q_3$ with utility $\frac{1}{\lceil\frac2\tau\rceil + 1} < \frac\tau2$ (and similar for agents in $R_x$). 
Since the agents on $q_1$ have the worst-case distribution of \Cref{boundlemma1}, we asymptotically achieve the PoA bound with $\alpha\to\infty$.

For $\tau \leq 2-\sqrt{2}$, we use the same construction, but with $|R_x|=|B_x|=\frac\alpha2$, which achieves the worst-case distribution of \Cref{boundlemma2}.
\end{proof}

\subsection{Price of Stability}

As seen in \Cref{lem:social-welfare-increase}, the IBE-PoS is 1 for $\tau=1$. We now generalize this for arbitrary $\tau$.
\begin{theorem}
    \label{thm:ibe-pos}
    The IBE-PoS for arbitrary $\tau>0$ is at most $\frac{1}{\tau}$.
\end{theorem}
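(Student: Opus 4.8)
The plan is to start at a social optimum and let impact-blind improving-response dynamics run to completion, arguing that the resulting IBE still captures a $\tau$-fraction of the optimal welfare. Fix a graph $G$ and let $\s^\ast = \textnormal{OPT}_G$ be a social optimum for the given threshold $\tau$. Starting from $\s^\ast$, I would perform impact-blind improving moves (for threshold $\tau$); by the IB-FIP (\Cref{thm:fip}) every such sequence is finite, so the dynamics terminate in some profile $\hat\s$ that is an IBE. The crux is to track the welfare one would obtain at threshold $1$: write $\rho(a,\s)$ for the fraction of same-type agents on $a$'s chosen resource (that is, $\rho_R(\s(a),\s)$ if $a$ is red and $\rho_B(\s(a),\s)$ if $a$ is blue), and let $W_1(\s) = \sum_{a \in A} \rho(a,\s)$ be the social welfare one would get if $\tau$ were $1$, while $W(\s)$ remains the actual social welfare for threshold $\tau$.

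The key monotonicity claim is that $W_1$ never decreases along the dynamics. As noted in the excerpt, every impact-blind improving move for $\tau < 1$ is also an impact-blind improving move for $\tau = 1$, and by \Cref{lem:social-welfare-increase} every impact-blind improving move (at $\tau=1$) strictly increases $W_1$. Hence each step of our $\tau$-dynamics strictly increases $W_1$, giving $W_1(\hat\s) \ge W_1(\s^\ast)$.

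It then remains to convert between $W_1$ and $W$ using two per-agent inequalities. Since $u(a,\s) = \min\{\rho(a,\s),\tau\}$, we have $u(a,\s) \le \rho(a,\s)$ trivially, and also $u(a,\s) \ge \tau\,\rho(a,\s)$: when $\rho(a,\s) \le \tau$ this reads $\rho \ge \tau\rho$, which holds as $\tau \le 1$, and when $\rho(a,\s) > \tau$ it reads $\tau \ge \tau\rho$, which holds as $\rho \le 1$. Summing over all agents yields $W(\s) \le W_1(\s)$ and $W(\s) \ge \tau\,W_1(\s)$ for every profile $\s$. Chaining these with the monotonicity above,
\[
W(\hat\s) \;\ge\; \tau\, W_1(\hat\s) \;\ge\; \tau\, W_1(\s^\ast) \;\ge\; \tau\, W(\s^\ast) \;=\; \tau\, W(\textnormal{OPT}_G).
\]
Because $\hat\s$ is an IBE, $\textnormal{maxIBE}(G) \ge W(\hat\s) \ge \tau\, W(\textnormal{OPT}_G)$, so $\frac{W(\textnormal{OPT}_G)}{\textnormal{maxIBE}(G)} \le \frac1\tau$; as $G$ was arbitrary, the IBE-PoS is at most $\frac1\tau$.

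The delicate point I expect to be the main obstacle is that the quantity which is monotone along $\tau$-impact-blind dynamics is $W_1$, not the actual welfare $W$ (indeed, $W$ itself need not increase under such moves). The whole argument therefore rests on the two sandwiching inequalities $\tau\,W_1(\s) \le W(\s) \le W_1(\s)$ together with the fact that $\tau$-moves form a subset of $1$-moves; getting those comparisons and the direction of the chain right is the substantive step, while everything else is routine bookkeeping.
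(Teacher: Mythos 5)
Your proof is correct, and it rests on the same two pillars as the paper's: the fact that an impact-blind improving move for threshold $\tau$ is also one for threshold $1$ combined with \Cref{lem:social-welfare-increase}, and the pointwise sandwich $\tau\,W_1(\s) \le W(\s) \le W_1(\s)$. The packaging differs, though. The paper's argument is static: it takes $\s_1$, the welfare optimum for threshold $1$, and observes that $\s_1$ is \emph{already} an IBE for every $\tau$ --- any impact-blind improving move at threshold $\tau$ would also be one at threshold $1$ and would therefore strictly increase $W_1$ by \Cref{lem:social-welfare-increase}, contradicting optimality of $\s_1$. The bound then follows from the chain $W_\tau(\s_1) \ge \tau\,W_1(\s_1) \ge \tau\,W_1(\s_\tau) \ge \tau\,W_\tau(\s_\tau)$, where $\s_\tau$ is the $\tau$-optimum. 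Your argument is dynamic: you start at the $\tau$-optimum, invoke the IB-FIP (\Cref{thm:fip}) to run impact-blind dynamics to completion, and use $W_1$ as a strictly increasing potential along the way. Each route buys something: the paper's is shorter, needs no appeal to \Cref{thm:fip}, and exhibits one canonical IBE (the $W_1$-maximizer) that certifies the bound for all $\tau$ simultaneously; yours additionally shows that an IBE capturing a $\tau$-fraction of the optimal welfare is \emph{reachable} from the social optimum by natural impact-blind better-response dynamics, and (via the step-size bound $W_1$ increases by more than $1/n^4$ per move, as in \Cref{polytimesteps}) that this happens within polynomially many moves. One cosmetic caveat in your write-up: moves to an empty resource are impact-blind improving moves by definition, and \Cref{lem:social-welfare-increase}'s displayed computation implicitly assumes the target resource is nonempty; the claim still holds there (the mover gains utility $1$ and the agents left behind gain as well), but since your dynamics may pass through such moves you should note this case explicitly --- the paper's static proof quietly relies on the same extension.
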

\begin{proof}
For a threshold $\tau$, let $\s_\tau$ be the social welfare optimum and denote by $W_\tau(\s)$ the social welfare of a state $\s$.
Observe that by \Cref{lem:social-welfare-increase} $\s_1$ is an impact-blind equilibrium for arbitrary $\tau$. 
Since the utility of an agent decreases at most by factor $t$, when changing the threshold from $\tau=1$ to $\tau=t$, we have $W_{t}(\s_1) \geq t W_1(\s_1)$.
Also $W_1(\s_1) \geq W_{1}(\s_t) \geq W_{t}(\s_t)$, and, therefore, we have $\frac{W_t(\s_t)}{W_t(\s_1)} \leq \frac{1}{t}$.
\end{proof}
Note that for $\tau < \frac1{\sqrt{2}}$, this result is dominated by the bound on the IBE-PoA.
Combining these two bounds, we get a bound of $\sqrt{2}$ on the IBE-PoS for any $\tau>0$.

\begin{theorem}
\label{thm:pos-not-one}
    The IBE-PoS for $0 < \tau \leq\frac{1}{2}$ is larger than $1$.
\end{theorem}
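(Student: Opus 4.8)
The plan is to exhibit, for each fixed $\tau \in (0,\tfrac12]$, a single instance whose social optimum strictly outperforms its best IBE, which immediately yields IBE-PoS $> 1$. The guiding intuition is that the cap at $\tau$ makes a resource socially optimal exactly when its minority fraction is at least $\tau$, whereas an impact-blind agent only compares her current utility against the majority fraction she observes \emph{before} moving. She is thus tempted to crowd into an already balanced resource, pushing its minority below $\tau$ and destroying welfare, without being able to foresee this effect.

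Concretely, I would build an instance with two resources $q$ and $q'$ in which only one agent has any choice. Place $r$ red and $b$ blue agents of degree $1$ on $q'$, and $c$ blue agents of degree $1$ on $q$; then add a single red agent $a$ with access to both $q$ and $q'$. There are then exactly two strategy profiles: profile $A$ with $a$ on $q$, and profile $B$ with $a$ on $q'$. I would pick $r$ so that the blue fraction on $q'$ is at least $\tau$ with $r$ reds but drops strictly below $\tau$ once $a$ also joins (i.e.\ $r \approx b(1-\tau)/\tau$), and pick $c$ so that $a$'s utility $\tfrac{1}{1+c}$ on $q$ lies strictly below $\tau$.

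The verification then splits into three claims. First, $A$ is not an IBE: on $q'$ agent $a$ sees red fraction $\geq \tfrac12 \geq \tau$, which exceeds her current utility $\tfrac{1}{1+c} < \tau$, so moving to $q'$ is an impact-blind improving move. Second, $B$ is an IBE: there $a$ sits in the red majority and is capped at $\tau$, while every other agent has degree $1$ and cannot move, so the under-served blues on $q'$ are simply stuck. As these are the only two profiles, $B$ is the unique IBE and $A$ is the social optimum provided $W(A) > W(B)$. Third, and this is the crux, I must check $W(A) > W(B)$. Since the reds and the $q$-blues are always capped at $\tau$, the only moving parts are $a$'s own utility and the $b$ blues on $q'$, giving $W(A) - W(B) = b\big(\min\{\tfrac{b}{r+b},\tau\} - \min\{\tfrac{b}{r+b+1},\tau\}\big) + \tfrac{1}{1+c} - \tau$. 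With the balancing choice of $r$, the blue-drop term equals $b\tau^2/(b+\tau)$, which tends to $\tau^2$ as $b \to \infty$, so the difference approaches $\tau^2 - \tau + \tfrac{1}{1+c}$.

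I therefore need an integer $c$ with $\tau(1-\tau) = \tau - \tau^2 < \tfrac{1}{1+c} < \tau$, equivalently an integer $1+c$ in the open interval $(\tfrac1\tau, \tfrac{1}{\tau(1-\tau)})$. This interval has length $\tfrac{1}{1-\tau} > 1$ for every $\tau \in (0,\tfrac12]$ and hence contains an integer, so such a $c$ exists; taking $b$ large enough then forces $W(A) - W(B) > 0$ (for $\tau = \tfrac13$ one can already take $b=2$, $r=4$, $c=3$, giving PoS $= \tfrac{273}{272}$). The main obstacle is exactly this last, delicate inequality: the optimum beats the unique IBE only by a vanishing margin, so the admissible window for $c$ and the growth rate of $b$ must be controlled simultaneously. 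A secondary nuisance is integrality of $r = b(1-\tau)/\tau$, handled for rational $\tau$ by choosing $b$ a suitable multiple of the denominator, and for irrational $\tau$ by choosing $b$ whose fractional part places $r = \lfloor b(1-\tau)/\tau\rfloor$ on the correct side of the threshold.
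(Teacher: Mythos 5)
Your construction is sound in design and is, in essence, the mirror image of the paper's: both proofs use a two-resource instance in which exactly one agent can move, the social optimum is the profile where she stays put (which fails to be an IBE because she has an impact-blind improving move), and the unique IBE is the profile after her move, which hurts a large group of agents sitting at the threshold by more than she gains. The paper's mover is a blue agent who \emph{abandons} her fellow minority blues; yours is a red agent who \emph{invades} a balanced resource and dilutes its blues --- same mechanism, opposite side. Your bookkeeping is correct: the choice of $c$ via an integer $1+c$ in the open interval $\left(\tfrac{1}{\tau},\tfrac{1}{\tau(1-\tau)}\right)$ of length $\tfrac{1}{1-\tau}>1$ works, the welfare identity $W(A)-W(B)=\tfrac{b\tau^2}{b+\tau}+\tfrac{1}{1+c}-\tau$ holds when $r=b(1-\tau)/\tau$ exactly, and your worked example ($\tau=\tfrac13$, $b=2$, $r=4$, $c=3$, ratio $\tfrac{273}{272}$) checks out. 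So for every \emph{rational} $\tau\in(0,\tfrac12]$ your argument is complete.

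The gap is the irrational case, and your proposed fix targets the wrong condition. With $r=\lfloor b(1-\tau)/\tau\rfloor$ and $f\in(0,1)$ the fractional part of $b(1-\tau)/\tau$, the requirement of being ``on the correct side of the threshold,'' i.e.\ $\tfrac{b}{r+b}\ge\tau>\tfrac{b}{r+b+1}$, holds \emph{automatically for every} $b$ --- that is never the problem. The problem is quantitative: the per-blue utility drop is then $\tfrac{\tau^2(1-f)}{b+\tau(1-f)}$, so the total blue loss tends to $\tau^2(1-f)$ rather than $\tau^2$, and
$W(A)-W(B)=\tfrac{b\tau^2(1-f)}{b+\tau(1-f)}+\tfrac{1}{1+c}-\tau$.
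For $b$ with $f$ close to $1$ (of which there are infinitely many) this is negative, since $\tfrac{1}{1+c}<\tau$; the unique IBE is then also the optimum and the instance shows nothing. Hence ``taking $b$ large enough'' does not suffice: you need $b$ large \emph{and} $f$ smaller than roughly $\bigl(\tfrac{1}{1+c}-\tau(1-\tau)\bigr)/\tau^2$. Such $b$ exist because the fractional parts $\{b(1-\tau)/\tau\}$ are dense in $[0,1)$ for irrational $\tau$ (Kronecker/Weyl), but that density argument is the missing ingredient and must be invoked explicitly. (For fairness: the paper's own proof is airtight only when $\tfrac{x}{y}=\tau$, since it charges the mover a gain of exactly $\tfrac1y$; for irrational $\tau$ it implicitly needs $\tfrac{x}{y}$ close to $\tau$, i.e.\ the same kind of approximation argument you need.)
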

\ifdefined\arxiv
\begin{proof}
     Let $\frac{x}{y} \le \tau $ with $x,y \in \mathbb{N}$ and $x\ge6$.
    Consider an instance with two resources $q_1$ and $q_2$.
    Resource $q_1$ is used by $x-1$ blue agents and $(y-x)+1$ red agents, $q_2$ is used by one blue agent. All agents have only one accessible resource except agent $a$. Agent $a$ is one of the blue agents on $q_1$ and also has an edge to $q_2$. See \Cref{fig:posnotone}.
    \begin{figure}[t]
\centering
\begin{subfigure}{0.45\linewidth}
  \centering
  \includegraphics[width=1\linewidth]{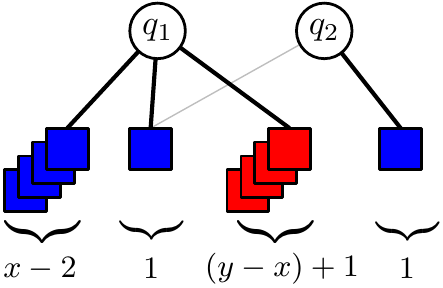}
    \captionsetup{justification=centering}
  \caption{Social optimum.}
  \label{fig:posopt}
\end{subfigure}
\hfill
\begin{subfigure}{0.45\linewidth}
  \centering
  \includegraphics[width=1\linewidth]{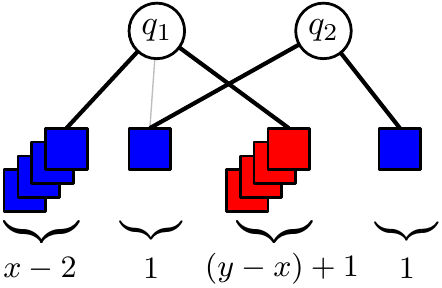}
    \captionsetup{justification=centering}
  \caption{The unigue IBE.}
  \label{fig:posIBE}
\end{subfigure}
\caption{Example instance from the proof of \Cref{thm:pos-not-one} showing that the PoS is larger than $1$.
}
\label{fig:posnotone}
\end{figure}

Note that in this game there are only two strategy profiles. Let the profile with $a$ on $q_1$ be $OPT$ and the one with $a$ on $q_2$ is the unique IBE.
    For the latter, observe that agent $a$ increases her utility by $\frac{1}{y}$ when switching to $q_2$.

    To quantify the loss in social welfare, observe that the red agents are in the majority on $q_1$ before and after the move, hence, their utility does not change. For each of the remaining $x-2$ blue agents on $q_1$ the utility decreases by $\frac{y-x+1}{y(y-1)}$ if $a$ switches to $q_2$. Thus the total decrease of the utility of agents on $q_1$ is
    \begin{linenomath}
    \begin{align*}
            \frac{(x-2)(y-x+1)}{y(y-1)}&=
            \frac{xy-2y+3x-x^2-2}{y(y-1)}\text.
    \end{align*}
    Subtracting the social welfare increase by the blue moving agent $a$ gives a total decrease in social welfare of
    \begin{align*}
    \frac{xy-2y+3x-x^2-2}{y(y-1)} - \frac{y-1}{y(y-1)}\geq \frac{x}{y(y-1)}>0 \text,
    \end{align*}
    \end{linenomath}
where the first inequality follows from $x\geq 6$ and $\frac{x}{y} \le \frac{1}{2}$, hence, $xy \geq 3y +x^2$.

Therefore $OPT$ is indeed the optimal solution but not an IBE and the theorem follows.
\end{proof}
\fi

\section{Conclusion and Outlook}
In this paper, we introduce \game{}s in which agents strategically select resources to optimize the fraction of same-type agents on their respective selected resources.
We investigate agents that fully know the impact of their own actions and also agents that are blind to their exact impact.
For the second case, which is arguably more realistic for settings with limited information, we present an efficient algorithm for computing equilibria. Moreover, we show that specific impact-blind equilibria approximate impact-aware equilibria well. Also, we show that the PoA is at most 2, showing that the game is well-behaved.
We believe that impact-blind equilibria are a natural object to study also in other settings, like Schelling Games, Hedonic Games, and other types of Resource Selection Games.
 
Given our algorithmic advances on the problem, it may now be interesting to investigate real-world examples of the problem.
Especially in the case of school choice, there may be data available on which our model could be used to draw additional conclusions.
With further research, it may be possible to identify measures that help to desegregate schools, which is a pressing current problem.
Note that careful consideration must be applied to our measure of social welfare.
Even though it accurately depicts the agents' preferences, the social optimum yields maximally segregated usage of the resources, which might be undesirable for some applications, e.g., in schools.
For these cases, diversity is a more suitable measure of social welfare.

\ifdefined\arxiv
    \printbibliography
\else
    \bibliographystyle{named}
    \bibliography{ijcai23-submission}
\fi

\end{document}